\newtheorem{theorem}{Theorem}
\newtheorem{proposition}{Proposition}
\newtheorem{example}{Example}
\newcommand{\acr}{a}
\newcommand{\rev}{\vec r}
\newcommand{\hnls}{\epsilon}
\newcommand{\wavefn}{\psi}
\newcommand{\boi}[2]{{]#1,#2[}}
\newcommand{\dnconv}{\searrow}
\newcommand{\otensor}{\otimes}
\newcommand{\ncurl}{\nabla\crossp}
\newcommand{\crossp}{\times}
\newcommand{\ndiv}{\nabla\dotp}
\newcommand{\const}{\text{const}}
\newcommand{\Lap}{\Delta}
\newcommand{\eqv}{\Leftrightarrow}
\newcommand{\csep}{,\quad}
\newcommand{\dotp}{\cdot}
\newcommand{\qeq}{\quad\eqv\quad}
\def\eqdef{=:}
\newcommand{\half}{\frac12}
\newcommand{\subeq}[2]{\mathord{\underbrace{\mathop{#1}}_{#2}}}
\newcommand{\defm}[1]{\emph{#1}}
\newcommand{\vv}{\vec v}
\newcommand{\drat}{r}
\newcommand{\pV}{\beta}
\newcommand{\xx}{\vec x}
\newcommand{\vort}{\omega}
\newcommand{\phs}{\sigma}
\newcommand{\amp}{a}
\newcommand{\Vpot}{V}
\newcommand{\vpot}{\varphi}
\newcommand{\gisen}{\gamma}
\newcommand{\pp}{p}
\newcommand{\ppf}{\hat p}
\newcommand{\dens}{\varrho}
\newcommand{\idens}{v}
\newcommand{\Mach}{M}
\newcommand{\csnd}{c}
\newcommand{\hpm}{h}
\newcommand{\hpmf}{\hat h}
\newcommand{\Hpm}{H}
\newcommand{\epm}{e}
\newcommand{\spm}{s}
\newcommand{\Temp}{T}
\newcommand{\fod}{f}
\renewcommand{\vec}{\mathbf}
\newcommand{\ve}{u}
\renewcommand{\vv}{\vec\ve}
\newcommand{\vy}{\ve^y}
\newcommand{\vx}{\ve^x}
\newcommand{\dat}[3]{\Big(\frac{\partial#1}{\partial#2}\Big)_{#3}}
\newcommand{\dati}[3]{(\partial#1/\partial#2)_{#3}}
\newcommand{\vn}{\vec n}
\newcommand{\vt}{\vec t}
\newcommand{\zz}{\vec z}
\newcommand{\bM}{M}
\newcommand{\bE}{E}
\newcommand{\bv}{\vec v}
\newcommand{\bP}{\vec p}
\begin{document}

\title[Triple points]{Triple points and sign of circulation}

\author{Volker Elling}\email{velling@math.sinica.edu.tw}\address{Department of Mathematics, Academia Sinica, Taipei}

\begin{abstract}
  Mach reflection generally produces a contact discontinuity whose circulation has previously been analyzed 
  using ``thermodynamic'' arguments based on the Hugoniot relations across the shocks.
  We focus on ``kinematic'' techniques that avoid assumptions about the equation of state, using only jump relations for conservation of mass and momentum, but not energy. 
  We give a new short proof for non-existence of pure (no contact) triple shocks, recovering a result of Serre. 
  For MR with a zero-circulation but nonzero-density-jump contact we show that the incident shock must be \emph{normal}.
  Nonexistence without contacts generalizes to two or more incident shocks if we assume all shocks are compressive. 

  The \emph{sign} of circulation across the contact has previously been controlled with entropy arguments, showing the post-Mach-stem velocity is generally smaller. 
  We give a kinematic proof assuming compressive shocks and another condition, for example backward incident shocks, or a weak form of the Lax condition.
  
  We also show that for 2+2 interactions (two ``upper'' shocks with clockwise flow meeting two ``lower'' shocks with counterclockwise flow in a single point)
  the circulation sign can generally not be controlled. For $\gamma$-law pressure 
  we show 2+2 interactions without contact must be either symmetric or antisymmetric, with symmetry favored at low Mach number and low shock strength.
  
  For full potential flow instead of the Euler equations we surprisingly find, contrary to folklore and prior results for other models,
  that pure triple shocks without contacts are possible, even for $\gisen$-law pressure with $1<\gisen<3$.
\end{abstract}

\maketitle

\section{Introduction}

\subsection{Background}

\begin{figure*}
  \hfil\input{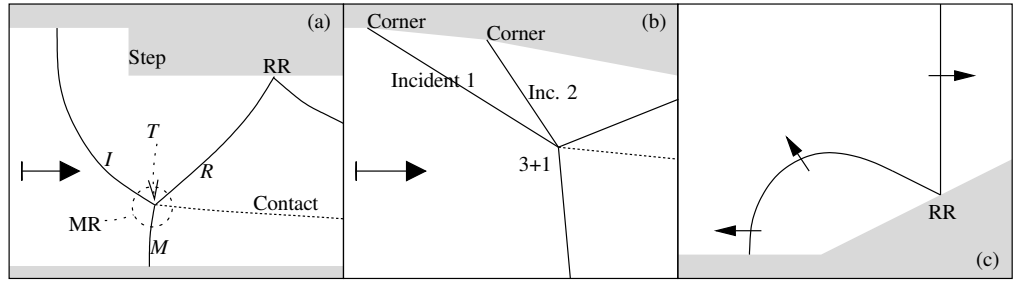}\hfil
  \caption{(a) Steady bow shock ahead a forward-facing step in supersonic flow, Mach reflection (MR) followed by regular reflection (RR).
    (b) Steady 3+1 MR from a two-corner ramp. (c) Pseudo-steady RR at a ramp.}
  \label{fig:mrrr}
\end{figure*}

\begin{figure*}
  \hfil\input{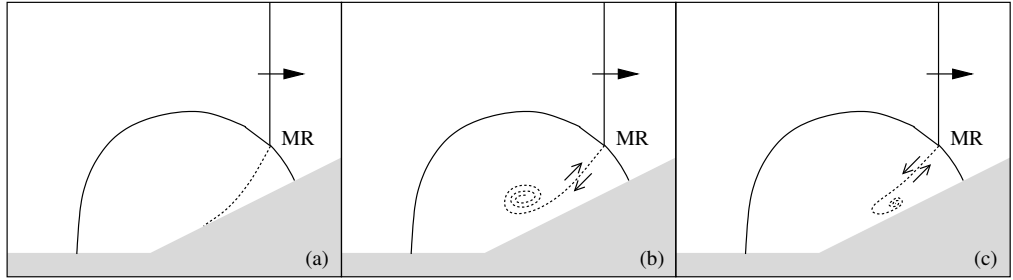}\hfil
  \caption{(a) Pseudo-steady MR with slip line ending at the wall. (b) Slip line ends in a clockwise spiral.
    (c) Slip line reversing and forming a narrow jet near the wall with counterclockwise spiral end.}
  \label{fig:mrsheets}
\end{figure*}

It is well-known that three shock waves cannot meet in a triple point alone without other waves, for example a contact discontinuity. 
These Mach reflections (MR) generally occur when shocks interact with each other or with a solid wall (fig.\ \ref{fig:mrrr} and \ref{fig:mrsheets}

The ``standard'' sign of velocity jump across the contact is so that in the triple point frame the speed is lower on the Mach stem side of the contact. This was already observed by \citet[p.\ 264]{neumann-1943}:
\begin{quote}
``But this compression occurs in the `upper' half in two stages $(I,R)$ and in the `lower' half in one $(M)$. By all experience, theoretical as well as experimental, the former process is less irreversible than the latter --- essentially because it is less abrupt. Hence the substance which crossed `above' $T$ may be expected to have a lower entropy than that which crossed `below' $T$. [...] Hence we may expect, in general, that the `upper' flow has also lower temperature, lower density, and (in this frame of reference!) higher kinetic energy, i.e.\ velocity.'' 
\end{quote}
Other paragraphs in a similar style have sparked debate, 
for example criteria for transition between regular and Mach reflection
\citep{henderson-lozzi,ben-dor-book,li-bendor-parametric,ben-dor-shockwaves2006,elling-sonic-potf}
or the discussion of weak vs.\ strong reflected shock 
\citep{teshukov,hornung-weakstrong,elling-liu-rims05,elling-liu-pmeyer}.
But the quoted argument can be made rigorous easily at least in the case of polytropic equation of state $\pp\idens=(\gisen-1)\epm$, 
where $\gisen$ is ratio of specific heats, $\pp$ pressure, $\idens$ specific volume, $\epm$ energy per mass: 
taking the logarithm of von Neumann's equation (19), which is essentially a Hugoniot relation, shows the jump $[\log\idens]$ across a shock is a \emph{strictly convex} function of $[\log\pp]$. 
Since $[\log\pp]$ is zero across the contact it is additive for (compressive) shocks, so $\log\idens$ increase across the Mach stem $M$ is larger than across the upper shocks $R,I$, 
which implies corresponding results for entropy per mass $\spm$, speed $|\vv|$ etc.

Another discussion of pure (contact-free) triple shocks for $\gisen$-law fluid can be found in \citet[par.\ 129]{courant-friedrichs}.

The convexity argument above does not extend to more general equations of state without effort.
\cite{henderson-menikoff} give an entropy-based proof, under some thermodynamic assumptions 
including restrictions on the Gr\"uneisen coefficient $\frac{\idens}{\Temp}(\frac{\partial\pp}{\partial\spm})_{\idens}$, 
where $\Temp$ is temperature while subscript $\idens$ indicates a partial derivative with respect to entropy per mass $\spm$ is taken along a curve of fixed $\idens$. 

The entropy-based proofs described above are rather ``thermodynamic'': in the conservation of energy relation $[\hpm+|\vv|^2/2]=0$ ($\hpm=\epm+\pp\idens$ enthalpy per mass)
the velocity $\vv$ is eliminated using conservation of mass $[\dens\vv\dotp\vn]=0$ and momentum $[\dens\vv\vv\dotp\vn+\pp\vn]=0$. 
The resulting Hugoniot relation is a constraint on pre- and post-shock $\idens,\spm$ derived purely from the equation of state $\epm=\epm(\idens,\spm)$.
In contrast \citet[Theorem 2.3]{serre-hbfluidmech} gives an elegant \emph{kinematic} proof of nonexistence of pure (contact-free) triple shocks.
``Kinematic'' means the argument avoids the equation of state by not using conservation of energy, and by using in conservation of momentum a pressure $\pp$ that is arbitrary
rather than constrained by $\pp=-\dati{\epm}{\idens}{\spm}$.
Seeing that this is sufficient to disprove \emph{zero} circulation, naturally one wonders whether it is enough to control the \emph{sign} as well.

The sign of circulation of the contact is important for various purposes;
e.g.\ in pseudo-steady Mach reflection at a ramp (fig.\ \ref{fig:mrsheets}) 
the contact cannot attach to the wall if circulation is ``non-standard'' unless the velocity field near the attachment point is rather strongly expanding in the wall-tangential direction \citep{elling-vortexcusps}.
A narrow jet may form instead (fig.\ \ref{fig:mrsheets}(c); \citep{henderson-vasilev-bendor-elperin-2003}). 
The rollup direction of the spiral end is determined by the circulation sign. 
The sign of vorticity determines whether it induces in near-wall velocity a change in downstream or upstream direction, the latter possibly causing near-stagnation and boundary layer separation.

``Thermodynamic'' arguments become very difficult once the equation of state becomes complex. Examples:\\
$\bullet$ Phase transitions like evaporation \citep{zamfirescu-guardone-colonna-jfm2008,nannan-guardone-colonna-pof2014}\\
$\bullet$ Dissociation \citep{grasso-paoli-pof2000}, especially molecular vs.\ atomic oxygen/nitrogen in hypersonic flight \citep{grover-torres-schwartzentruber-pof2019}, classically at bow shocks ahead of spacecraft during atmospheric reentry \citep{boyd-candler-levin-pof1995}\\
$\bullet$ More generally chemical equilibrium reactions, whose chemical species and thermodynamic potentials can be refined to arbitrary complexity \\
$\bullet$ Irreversible transitions from metastable states, especially deonation waves in explosives, where Mach reflections can be observed \citep{bdzil-short-jfm2017,short-quirk-annrevfluidmech2018,yakunliu-jianpingyin-zhijunwang-pof2019}, or burning fuel-air mixtures in scramjets or similar engines
where transition between regular vs.\ Mach or other irregular reflections is closely related to engine unstart (\citep{nanli-juntaochang-etal-pof2018}, also \cite{li-bendor-parametric}).

Shocks in astrophysics should be mentioned as well, although in many cases low densities, plasma effects and radiation may not allow assuming thermal equilibrium.

Even for simpler fluids, seemingly natural conditions such as positive fundamental derivative (genuine nonlinearity) 
may not be universal \citep{bethe,zeldovich-raizer,lambrakis-thompson-1972,colonna-guardone-nannan-2007}.
Assumptions on the equation of state are often derived from stability considerations for thermodynamic states, shocks or viscous shock layers (\citep{randolph-smith,fowles-jfm1981}, also \cite[Section VI]{menikoff-plohr}). 
But shear contacts in particular are rarely stable anyway, yet when breakdown produces complex layers that resemble a contact/shock at larger scale, 
aggregate jump relations must still be satisfied across the layer.

``Kinematic'' proofs are attractive because they avoid equation of state modelling, allowing results with shorter proofs that apply to a much wider range of practical flows.

\subsection{Overview}

In section \ref{section:zerocirc} we give a new kinematic proof of pure triple shock nonexistence (theorem \ref{th:twoplusonetriple}). 
The new proof technique, a reduction to two-particle collision, allows generalizations to multiple particles representing multiple incident shocks (fig.\ \ref{fig:kshocks}, theorem \ref{th:kplusonedensity}).
If we weaken assumptions to allow a contact with nonzero density jump but still zero velocity jump 
we find the incident shock must be normal and the density jump positive (theorem \ref{th:twoplusonedensity}). 

In section \ref{section:circulation} we consider the sign of circulation on the contact; example \ref{ex:negcircposrjump} shows compressiveness of shocks is not sufficient to control it, 
so we add additional assumptions: Proposition \ref{prop:backward-p} shows circulation is ``standard'' if the velocities between the upper shocks are negative.
Theorem \ref{th:circ-backward} shows this is true for incident shocks that are \emph{backward} (outward tangential velocity), 
the most common case in applications; theorem \ref{th:vyneg-poscirc} assumes a Lax condition or the implied weak Lax condition \eqref{eq:weaklax} instead.

Apart from the Euler equations the non-existence of triple shocks has also been shown for the transonic small disturbance equation (TSD)
and other models (see \citep{rosales-tabak}, \citep[Appendix 2]{morawetz-potential-theory}, \citep[Theorem 1.1]{yuxi-zheng-book}, 
\citep[Theorem 2.4]{serre-hbfluidmech});
for \emph{full} (not TSD) compressible potential flow 
we find in section \ref{section:potentialflow} that pure triple shocks do exist, 
in particular for $\gisen$-law pressure with $1<\gisen<3$, including all common values (example \ref{ex:gtwo} and \ref{ex:gall}). 

In section \ref{section:twoplustwo} we show that if instead of a single Mach stem two or more lower shocks are permitted, then circulation sign control cannot be expected even for polytropic flow.

\section{Zero-circulation shock interactions}
\label{section:zerocirc}%

In $k+1$ MR (see fig.\ \ref{fig:kshocks}) multiple \defm{incident} shocks numbered $1,...,k-1$ clockwise from the negative horizontal axis 
are followed by the \defm{reflected} shock $k$, 
the contact $c$ and the \defm{Mach stem} shock $0$, all halflines meeting only in the origin. The \defm{upper} shocks $1,...,k$ separate constant states $0,...,k$, 
state $k$ between reflected shock and contact, state $c$ between contact and stem. By scaling and rotational symmetry we may take $\vv_0=(1,0)$ throughout. 
Unit shock normals $\vn$ are clockwise for the upper shocks, counterclockwise for the Mach stem; jumps $[]$ are change in $\vn$ direction.

\clearpage

\begin{figure}
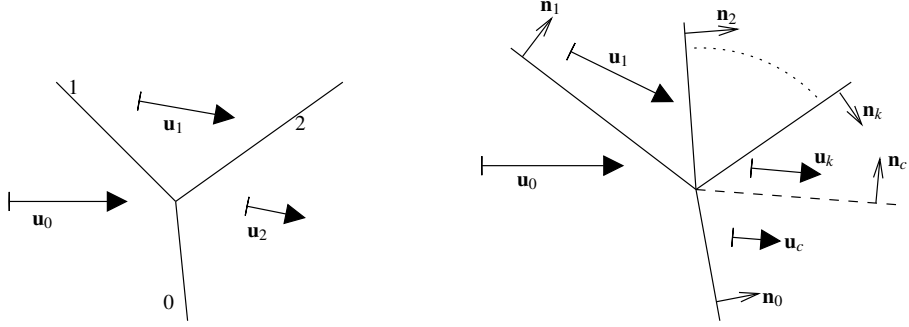

  \parbox[b]{.49\textwidth}{%
    \input{puretriple.tex}
  }\hfil%
  \parbox[b]{.49\textwidth}{%
    \input{kshocks.tex}%
  }%
  \caption{Left: pure triple shock. Right: $k+1$ MR}
  \label{fig:tripleshock}
  \label{fig:kshocks}
\end{figure}

Consider a pure triple-shock, i.e.\ $2+1$ MR without contact (fig.\ \ref{fig:tripleshock}). 
We may assume $[\pp]>0$ across each shock, by renumbering states if necessary.
Conservation of mass across discontinuities means zero jump of mass flux $\dens\vv\dotp\vn$, which is nonzero for a shock (as opposed to contact) so that
conservation of tangential momentum
\[ 0 = [\dens~\vv\dotp\vn~\vv\dotp\vt] = \dens\vv\dotp\vn[\vv\dotp\vt] \] 
implies tangential velocity $\vv\dotp\vt$ cannot jump either. Then $[\vv]$ is a shock normal, so 
\[ \big|[\vv]\big|^2=\big([\vv]\dotp\vn\big)^2 = \big([\idens\dens\vv\dotp\vn]\big)^2 = [\idens]^2 (\dens\vv\dotp\vn)^2 \] 
allows eliminating the unit normal $\vn$ in conservation of normal momentum:
\[ \subeq{[\pp]}{\eqdef\bM>0} = [-\dens(\vv\dotp\vn)^2] = [-\idens] (\dens\vv\dotp\vn)^2 = \big|\subeq{[\vv]}{\eqdef\bP}\big|^2 / \subeq{[-\idens]}{\eqdef 2\bE}.  \] 
So if we associate shocks with ``particles'' of mass $\bM$ and momentum $\bP$, then $\bE=|\bP|^2/2\bM$ is energy; $\bM=[\pp]>0$ implies $\bE>0$. 
In absence of a contact
\[ \subeq{\vv_2-\vv_0}{=\bP_0} = \subeq{\vv_2 - \vv_1}{=\bP_2} + \subeq{\vv_1-\vv_0}{=\bP_1}; \] 
similarly $\bM_0=\bM_1+\bM_2$ and $\bE_0=\bE_1+\bE_2$,
so a triple shock corresponds to an \emph{elastic} collision of two incoming particles forming a single outgoing cluster --- which is well-known to be impossible. 
\begin{theorem}
  \label{th:twoplusonetriple}
    The Euler equations do not permit pure triple shocks
    (even if we allow expansive shocks and non-conservation of energy and make no assumptions about the equation of state).
\end{theorem}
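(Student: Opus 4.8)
The plan is to carry the particle--collision correspondence set up above to its conclusion, turning the theorem into the elementary fact that a perfectly inelastic merger of two particles cannot conserve kinetic energy. Concretely, I would assume for contradiction that a pure triple shock exists, attach to the two upper shocks $1,2$ and to the Mach stem $0$ the particles $(\bM_i,\bP_i,\bE_i)$ with $\bM_i=[\pp]$, $\bP_i=[\vv]$, $2\bE_i=[-\idens]$ across the respective shock, and record the three additivity relations $\bM_0=\bM_1+\bM_2$, $\bP_0=\bP_1+\bP_2$, $\bE_0=\bE_1+\bE_2$ already derived, together with the purely kinematic constraint $\bE_i=|\bP_i|^2/(2\bM_i)$ that followed from conservation of mass and momentum alone.

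The analytic heart is the strict convexity of $\bP\mapsto|\bP|^2/\bM$ in the momentum variable at fixed positive mass. I would establish the identity
\[ \Big(\frac{|\bP_1|^2}{2\bM_1}+\frac{|\bP_2|^2}{2\bM_2}\Big)-\frac{|\bP_1+\bP_2|^2}{2(\bM_1+\bM_2)} = \frac{\bM_1\bM_2}{2(\bM_1+\bM_2)}\Big|\frac{\bP_1}{\bM_1}-\frac{\bP_2}{\bM_2}\Big|^2 \ge 0, \]
valid since $\bM_1,\bM_2>0$ (after renumbering so that $[\pp]>0$ across each shock, as in the setup). Using $\bP_0=\bP_1+\bP_2$ and $\bM_0=\bM_1+\bM_2$, the middle term is exactly $\bE_0$, so this reads $\bE_1+\bE_2\ge\bE_0$, with equality if and only if the two ``velocities'' coincide, $\bP_1/\bM_1=\bP_2/\bM_2$.

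Since the triple-shock relations force $\bE_0=\bE_1+\bE_2$ \emph{exactly}, equality must hold, giving a common vector $\vec w:=\bP_1/\bM_1=\bP_2/\bM_2$. Each $\bM_i>0$ forces $\bP_i=[\vv]\ne 0$ via $\bM_i=|\bP_i|^2/(2\bE_i)$, so $\vec w\ne 0$ and $[\vv]_i=\bM_i\vec w$ shows both velocity jumps are parallel to $\vec w$. Recalling from the setup that $[\vv]$ is a shock normal, the unit normals $\vn_1,\vn_2$ are then parallel, so the two upper shocks lie on a common line through the triple point and cannot be distinct halflines emanating from the origin in three different directions --- the desired contradiction.

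The main obstacle I anticipate is not the inequality, which is routine, but the equality analysis: one must verify that the forced equality is genuinely incompatible with a nondegenerate triple point rather than merely signalling a trivial (no-op) collision. This is exactly where the kinematic facts established earlier are indispensable --- that a shock carries nonzero mass flux, hence $[\vv]$ is a \emph{nonzero} shock normal --- since they convert ``equal velocities'' into ``parallel, hence coincident, shock lines.'' I would finally emphasize that the argument invokes neither conservation of energy nor any equation of state, only the mass and momentum jump relations, which is precisely the kinematic generality asserted in the theorem (and which is what makes expansive shocks and non-conservation of energy harmless).
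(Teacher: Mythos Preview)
Your proof is correct and follows essentially the same route as the paper: the identity you display is precisely the center-of-mass kinetic-energy decomposition that the paper obtains by ``passing to the rest frame of cluster $0$'' (equivalently, subtracting $\bv_0\dotp(\bP$-conservation$)$ and adding $\half|\bv_0|^2$ times $\bM$-conservation to get $\half\bM_1|\bv_1-\bv_0|^2+\half\bM_2|\bv_2-\bv_0|^2=0$). One small tightening of your final step: showing only $\vn_1\parallel\vn_2$ still permits antipodal halflines, so note that $\bP_0=\bP_1+\bP_2=\bM_0\vec w$ makes all three normals parallel, forcing the three halflines onto a single line through the origin and hence at least two shocks to coincide.
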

(The proof is familiar e.g.\ from kinetic gas theory: using Galilean invariance pass to the rest frame of cluster $0$ where its energy is zero so that
conservation would imply the incoming particles also have zero energy and hence zero velocity $\bv_i$. 
Mathematically, subtract from $\bE$ conservation the dot product of $\bv_0$ with $\bP$ conservation, 
then add $\half|\bv_0|^2$ times $\bM$ conservation to obtain $\half\bM_1|\bv_1-\bv_0|^2+\half\bM_2|\bv_2-\bv_0|^2=0$.)

So far we have a slight extension of Theorem 2.3 in \citep{serre-hbfluidmech}.
But the particle collision picture of our new proof readily suggests generalizations,
for example to $k+1$ MR with $k\geq 3$ (fig.\ \ref{fig:kshocks}).
Multiple particles clustering into one is possible by \emph{in}elastic collision, with decrease of kinetic energy:
\[ \idens_0-\idens_c = 2\bE_0 < 2\bE_1 + ... + 2\bE_k < \idens_0-\idens_1 + ... + \idens_{k-1} - \idens_k \]
so $\idens_c-\idens_k>0$, hence $\dens_k-\dens_c<0$. In contrast to prior work we obtain nonexistence even if we only 
require the density jump to be \emph{nonpositive} rather than zero.

It is necessary to make additional assumptions, however. To begin with we cannot distinguish 3+1 interactions from 2+2 (see fig.\ \ref{fig:fournocontacts}), which are well-known to exist,
unless we assume normal velocities are clockwise for the upper shocks (fig.\ \ref{fig:kshocks}), counterclockwise for the Mach stem, which completes our definition of $k+1$ MR.
In addition we still need $[\pp]>0$ so that (see above) $[\dens]$ and hence $\bM,\bE$ are positive, but renumbering is generally not possible for more than three states, so we need to \emph{assume} all shocks are compressive. 
\begin{theorem}
  \label{th:kplusonedensity}%
  The Euler equations do not permit a compressive $k+1$ MR with zero velocity jump $\vv_k-\vv_c$ across the contact
  unless the density jump $\dens_k-\dens_c$ is positive (in particular nonzero). 
\end{theorem}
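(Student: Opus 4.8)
The plan is to extend the particle-collision picture from the elastic case (Theorem~\ref{th:twoplusonetriple}) to an \emph{inelastic} merger of $k$ particles into one, and to identify the kinetic-energy defect of that merger with the specific-volume jump across the contact.

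First I would attach to each shock $i\in\{0,1,\dots,k\}$ the particle data $\bM_i=[\pp]_i$, $\bP_i=[\vv]_i$ and $\bE_i=|\bP_i|^2/2\bM_i=[-\idens]_i/2$, exactly as in the derivation preceding Theorem~\ref{th:twoplusonetriple}; compressiveness gives $\bM_i>0$ and hence $\bE_i>0$, so $\bP_i\neq0$ and the particle velocity $\bv_i=\bP_i/\bM_i$ is well defined. With the state labeling fixed so that $[\pp]_i>0$, the upper shocks telescope from state $0$ to state $k$ while the Mach stem joins state $0$ to state $c$. The two contact conditions then furnish the conservation laws of the collision: zero velocity jump $\vv_k=\vv_c$ gives $\sum_{i=1}^k\bP_i=\vv_k-\vv_0=\vv_c-\vv_0=\bP_0$ (momentum), while continuity of pressure $\pp_k=\pp_c$ gives $\sum_{i=1}^k\bM_i=\pp_k-\pp_0=\pp_c-\pp_0=\bM_0$ (mass).

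With $\bv_0=\bP_0/\bM_0$ the cluster velocity, the center-of-mass identity $\sum_{i=1}^k\bM_i|\bv_i|^2-\bM_0|\bv_0|^2=\sum_{i=1}^k\bM_i|\bv_i-\bv_0|^2$ rewrites as
\[ \sum_{i=1}^k 2\bE_i-2\bE_0=\sum_{i=1}^k\bM_i|\bv_i-\bv_0|^2\geq0. \]
Since $2\bE_i=[-\idens]_i$ telescopes the left-hand side to $(\idens_0-\idens_k)-(\idens_0-\idens_c)=\idens_c-\idens_k$, I obtain $\idens_c-\idens_k\geq0$, equivalently $\dens_k-\dens_c\geq0$.

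The one genuinely non-routine step is upgrading this to the \emph{strict} inequality asserted by the theorem. Equality would force $\bv_i=\bv_0$ for every $i$, so all $\bP_i=[\vv]_i$ would be parallel; but $[\vv]_i$ is a shock normal, and an honest $k+1$ MR with $k\geq2$ has at least two upper shocks issuing at distinct angles with non-parallel normals, a contradiction. I expect this non-degeneracy to be the main obstacle, since it is the only place where the geometry of the triple point---rather than bare conservation bookkeeping---is used; care is needed to exclude the degenerate configuration of two collinear shocks, which does not arise at a genuine triple point.
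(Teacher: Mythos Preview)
Your proof is correct and follows the same particle-collision approach as the paper: conservation of $\bM$ and $\bP$ from $\pp_k=\pp_c$ and $\vv_k=\vv_c$, then the center-of-mass identity to show $\sum_{i=1}^k 2\bE_i\geq 2\bE_0$, telescoping to $\idens_c-\idens_k\geq0$. You are in fact more explicit than the paper about the strict inequality, correctly noting that equality forces all shock normals parallel, which is incompatible with $k+1\geq3$ distinct rays meeting at the origin.
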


Shocks are usually compressive, although physical examples of expansive shocks have been considered \citep{bethe,zeldovich-raizer,lambrakis-thompson-1972,colonna-guardone-nannan-2007}.
If we omit compressiveness for $k\geq 3$, then there are counterexamples: 
\begin{example}
\begin{alignat*}{9} \vv_0 &= (1,0),&\ \vv_1 &= (\frac34,-\frac14),&\ \vv_2 &= (\frac54,-\frac14),&\ \vv_3 &= (\frac{36}{37},\frac{9}{74}) = \vv_c , \\
\dens_0 &= 1,&\ \dens_1 &= 2,&\ \dens_2 &= \frac65 ,&\ \dens_3 &= \frac{148}{63} = \dens_c, \notag\\
\pp_0 &= 1,&\ \pp_1 &= \frac54,&\ \pp_2 &= \half ,&\ \pp_3 &= \frac{38}{37} = \pp_c. \end{alignat*} 
Using $\vv_i$ it is easy to calculate the shock normals $\vn_i$ and tangents and to verify each jump relation and that with $\vn_{1,2,3}$ clockwise and $\vn_0$
counterclockwise we have $\vv\dotp\vn>0$ on each side of each shock. 
The outward tangent angles of the discontinuities are
\begin{alignat*}{9} \alpha_1 &= 135^\circ ,\  \alpha_2 = 90^\circ ,\  \alpha_3 \approx 36.7 ,\  \alpha_0 \approx -167.5, \end{alignat*} 
hence shocks appear in the right order. 
Here shock 2 is non-compressive.
\end{example}
This example, like later ones, is ``kinematic'': attention is paid only to jump relations for mass and momentum,
although we emphasize that conservation of energy \emph{can} always be satisfied simply by defining appropriate $\epm_i$. 
Then $\pp_i,\epm_i$ may even arise from a (not necessarily reasonable) equation of state, except in (avoidable) cases of two different $\epm_i$ assigned to the same $(\dens_i,\pp_i)$.

Example \ref{ex:zerocircposrjump} will show we can\emph{not} rule out positive density jumps with zero circulation without making additional assumptions. 
But for 2+1 MR it is possible to prove a strong additional constraint. 
Since $[\vv]$ is a shock normal we may eliminate $\vn=[\vv]/|[\vv]|$:
\begin{alignat}{5} [\pp] = -[\dens(\vv\dotp\vn)^2] 
= -\dens\vv\dotp\vn[\vv\dotp\vn] 
= -\dens\vv\dotp[\vv]. \label{eq:puu}\end{alignat} 
Therefore
\begin{alignat*}{5} \pp_2-\pp_0 &= \pp_2-\pp_1+\pp_1-\pp_0 \\&= \dens_0\vv_0\dotp(\vv_0-\vv_1) + \dens_1\vv_1\dotp(\vv_1-\vv_2). \end{alignat*}
Across the Mach stem (we may use $\pp_2=\pp_c$ in any case and $\vv_2=\vv_c$ by assumption):
\begin{alignat*}{5} \pp_2-\pp_0 &= \pp_c-\pp_0 = \dens_0\vv_0\dotp(\vv_0-\vv_c) 
\\&= \dens_0\vv_0\dotp(\vv_0-\vv_1) + \dens_0\vv_0\dotp(\vv_1-\subeq{\vv_c}{=\vv_2}) \end{alignat*} 
Subtract the last two equations: 
\begin{alignat*}{5} 0 &= (\dens_1\vv_1-\dens_0\vv_0)\dotp(\vv_1-\vv_2). \end{alignat*} 
If shock $1$ is \emph{not} normal, then $\dens_1\vv_1-\dens_0\vv_0$ is by $[\dens\vv]\dotp\vn=0$ a \emph{nonzero} tangent to shock $1$; $\vv_1-\vv_2$ is a \emph{normal} to shock $2$, 
so the two shocks either coincide (which we excluded) or are antipodes. But antipodes implies $\vv_1-\vv_2\parallel\vv_0-\vv_1$, 
so both are $\parallel\vv_0-\vv_2$ which is normal to shock $0$. But then shock $0$ must coincide either with $1$ or $2$, contradiction:
\begin{theorem}
  \label{th:twoplusonedensity}
  For a 2+1 MR (not necessarily compressive) with zero circulation the density jump across the contact is positive and the incident shock is \emph{normal}.
\end{theorem}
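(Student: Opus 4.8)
The plan is to treat the two assertions separately, using for each a tool the paper has already built: the pressure--velocity identity \eqref{eq:puu} for the normality claim, and the particle-collision/convexity picture for the density claim.

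For the incident shock I would eliminate $\vn$ through \eqref{eq:puu} and evaluate $\pp_2-\pp_0$ in two ways. Summing \eqref{eq:puu} across shocks $1$ and $2$ gives $\pp_2-\pp_0=\dens_0\vv_0\dotp(\vv_0-\vv_1)+\dens_1\vv_1\dotp(\vv_1-\vv_2)$, while across the single Mach stem, using $\pp_c=\pp_2$ and the zero-circulation hypothesis $\vv_c=\vv_2$, it equals $\dens_0\vv_0\dotp(\vv_0-\vv_1)+\dens_0\vv_0\dotp(\vv_1-\vv_2)$. Subtracting the two leaves
\[ 0=(\dens_1\vv_1-\dens_0\vv_0)\dotp(\vv_1-\vv_2). \]
The geometric reading finishes it: by mass conservation $[\dens\vv]\dotp\vn=0$ the factor $\dens_1\vv_1-\dens_0\vv_0$ is tangent to shock $1$, with tangential part $(\dens_1-\dens_0)(\vv\dotp\vt)$, hence nonzero exactly when shock $1$ is non-normal; and $\vv_1-\vv_2$ is a normal to shock $2$. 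Orthogonality then forces the two shock lines parallel, so (sharing the origin, and not coinciding) antipodal, whence $\vv_1-\vv_2\parallel\vv_0-\vv_1\parallel\vv_0-\vv_2$, the last being normal to the stem; all three shocks would then lie on one line, a contradiction. The only delicate point here is carrying the ``antipodal, not coincident'' case through the cascade, which I would spell out explicitly.

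For the density jump I would pass to the collision variables $\bM_i=[\pp]_i$, $\bP_i=[\vv]_i$, $\bE_i=|\bP_i|^2/2\bM_i$. Mass and momentum additivity give $\bM_0=\bM_1+\bM_2$ and $\bP_0=\bP_1+\bP_2$, and the identity $2\bE_i=[-\idens]_i$ converts the contact density jump into the collision's energy defect,
\[ \idens_c-\idens_2=2(\bE_1+\bE_2-\bE_0)=\frac{|\bM_2\bP_1-\bM_1\bP_2|^2}{\bM_0\bM_1\bM_2}. \]
The numerator vanishes only in the degenerate elastic case, which is precisely a pure triple shock; Theorem \ref{th:twoplusonetriple} excludes this, so the jump is nonzero. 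Strict convexity of $(\bP,\bM)\mapsto|\bP|^2/\bM$ on $\bM>0$ gives $\bE_0<\bE_1+\bE_2$, i.e.\ $\idens_c>\idens_2$ and $\dens_c<\dens_2$, which is the asserted positive jump --- provided the component masses $\bM_1,\bM_2$ are positive.

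The hard part will be exactly that proviso: the sign of the jump is the sign of $\bM_0\bM_1\bM_2$, which is not automatic once non-compressive shocks are allowed. I would try to reduce to the compressive case by renumbering the three $(\pp,\vv)$-states so the two component shocks become compressive, as was done for pure triple shocks, taking care that the renumbering leaves the stem/upper decomposition (and hence the displayed formula) intact; should that reduction fail to respect the decomposition, the fallback is a connectedness argument: the jump is continuous and, by Theorem \ref{th:twoplusonetriple}, nowhere zero on the space of admissible $2+1$ configurations, so its sign is constant and can be read off a single standard compressive example. Making one of these two reductions rigorous is where the real work lies.
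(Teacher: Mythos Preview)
Your normality argument is identical to the paper's: derive $(\dens_1\vv_1-\dens_0\vv_0)\dotp(\vv_1-\vv_2)=0$ from the two evaluations of $\pp_2-\pp_0$, then read off the contradiction geometrically via the antipodal cascade. Nothing to add there.

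For the density claim your route is also the paper's. The paper does not give a separate argument inside Theorem~\ref{th:twoplusonedensity}; it relies on the inelastic-collision inequality established just before Theorem~\ref{th:kplusonedensity}, together with the earlier remark that for \emph{three} $(\vv,\pp)$-states ``renumbering is possible'' so that compressiveness may be assumed without loss of generality. You reproduce exactly this, deriving the clean identity $\idens_c-\idens_2=|\bM_2\bP_1-\bM_1\bP_2|^2/(\bM_0\bM_1\bM_2)$ and then proposing the same renumbering the paper invokes.

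Where you go beyond the paper is in flagging that the renumbering need not respect the stem/upper decomposition. That caution is well placed: a relabeling of the three $(\vv,\pp)$-values that makes all $\bM_i>0$ will in general reassign which physical shock plays the role of the Mach stem, and since $\dens_2$ and $\dens_c$ are tied to the \emph{physical} two-shock and one-shock paths, the sign of $\dens_2-\dens_c$ is not invariant under such relabeling. Your fallback via connectedness is not in the paper and would require showing the zero-circulation $2{+}1$ configuration space (with the ordering and flow-direction constraints, but without compressiveness) is connected, which is not obvious. So your proposal is faithful to the paper's own argument, and your explicit identification of the renumbering issue is the one place you are more careful than the text.
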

The last conclusion by itself is strong enough to conflict with the circumstances of most practical applications, where incident shocks are backward.

The following example shows we cannot rule out zero circulation with positive density jump if we merely assume compressive shocks:
\begin{example} 
  \label{ex:zerocircposrjump}%
  \begin{alignat*}{5} \vv_0 &= (1,0) \csep &\vv_1 &= (\frac23,0) \csep &\vv_2 &= (\frac13,\frac14) &= \vv_c \\
  \pp_0 &= 1 \csep &\pp_1 &= \frac43 \csep &\pp_2 &= \frac53 &= \pp_c \\
  \dens_0 &= 1 \csep &\dens_1 &= \frac32 \csep &\dens_2 &= \frac{48}{7} \csep \dens_c &= \frac{96}{23} \end{alignat*} 
  Clearly $\dens_2>\dens_c$, as implied by Theorem \ref{th:kplusonedensity}. Discontinuity outward tangent angles:
  \begin{alignat*}{5} \alpha_1 &= 90^\circ \csep \alpha_2 \approx 53^\circ \csep \alpha_c \approx 37^\circ \csep \alpha_0 \approx -111^\circ \end{alignat*} 
\end{example}

For k+1 MR with $k\geq 3$ there is more freedom and we can no longer prove the (first) incident shock is normal;
numerical examples can be obtained by splitting shock $1$ of a 2+1 MR into a backward and forward half 
(as we prove in Theorem \ref{th:circ-backward} it is not possible to have all-backward incident shocks):
\begin{example} 
  \label{ex:zerocircposrjumpthree}%
  \begin{alignat*}{5} 
    \vv_0 &= (1,0) ,\quad & \pp_0 &= 1 ,\quad & \dens_0 &= 1 ,\quad \\
    \vv_1 &= (\frac9{10},-\frac1{30}) ,\quad & \pp_1 &= \frac{11}{10} ,\quad & \dens_1 &= \frac98 ,\quad \\ 
    \vv_2 &= (\frac7{10},\frac1{30}) ,\quad & \pp_2 &= \frac{261}{200} ,\quad & \dens_2 &= \frac{369}{248} \\
    \vv_3 &= (\frac35,\frac{134}{615}) ,\quad &\pp_3 &= \frac{7}{5} ,  &\dens_3 &= \frac{287451}{59876} \\
    \vv_c &= \vv_3 ,\quad & \pp_c &= \pp_3 ,\quad & \dens_c &= \frac{75645}{36409}  
  \end{alignat*} 
  Discontinuity outward tangent angles:
  \begin{alignat*}{5} \alpha_1 &\approx 108^\circ ,\quad \alpha_2 \approx 72^\circ ,\quad \alpha_3 \approx 28^\circ ,\quad \alpha_c \approx 20^\circ ,\quad \alpha_0 \approx -119^\circ \end{alignat*} 
\end{example}

\section{Sign of circulation}
\label{section:circulation}

\subsection{Preliminaries}

Having exhausted the question of zero circulation we consider kinematic arguments for the 
\emph{sign} of circulation. Most commonly circulation on the contact is observed to be \emph{clockwise}, with $\vv_k,\vv_c$ all in the open right halfplane and $|\vv_k|>|\vv_c|$, 
hence \emph{positive} velocity jump $|\vv_k|-|\vv_c|$. 
Although compressiveness went a long way towards ruling out circulation becoming zero, ruling out negative requires additional assumptions, 
as shown by the following example (which is a small perturbation of example \ref{ex:zerocircposrjump}):
\begin{example}
  \label{ex:negcircposrjump}%
  \begin{alignat*}{5} \vv_0 &= (1,0) ,\quad \vv_1 = (\frac23,\frac19) ,\quad \vv_2 = (\frac13,\frac13) ,\quad \vv_c = \frac{18}{17} \vv_2 \\
    \pp_0 &= 1 ,\quad \pp_1 = \frac43 ,\quad \pp_2 = \frac{28}{17} = \pp_c \\
    \dens_0 &= 1 ,\quad \dens_1 = \frac{27}{17} ,\quad \dens_2 = \frac{144}{17} ,\quad \dens_c = \frac{187}{30} 
  \end{alignat*} 
  Clearly $|\vv_2|-|\vv_c|<0$, i.e.\ negative circulation.
  Angles of outward discontinuity tangents:
  \begin{alignat*}{5} \alpha_1 &\approx 71.6^\circ ,\quad \alpha_2 \approx 56.3^\circ ,\quad \alpha_c = 45^\circ ,\quad \alpha_0 \approx -118.6^\circ \end{alignat*} 
\end{example}
To prove a conclusive result about circulation sign we need to add additional assumptions. 
We first need to rule out ``exotic'' MR; for the proof of the following proposition see the Appendix.
\newcommand{\propmanya}{%
  Consider $k+1$ MR with compressive shocks. \\
  (a) $\vv_c$ is in the open right halfplane.\\
  (b) The Mach stem is in the open lower halfplane. \\
  (c) If shocks $1,...,i$ are in the open upper left quadrant, then they are backward and produce $\vv_1,...,\vv_i$ in the open lower right quadrant;
  angular velocity is clockwise everywhere in the open upper left quadrant.\\
  (d) $\vv_c$, $\vv_k$ and contact point exactly in the same direction. \\
  (e) The angle between consecutive upper shocks is less than $180^\circ$. 
}
\begin{proposition}%
  \label{prop:manya}%
  \propmanya%
\end{proposition}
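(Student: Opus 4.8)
The plan is to reduce every assertion to elementary two-dimensional trigonometry built on three facts that hold across each compressive shock: tangential velocity is continuous (so $[\vv]$ is a shock normal), compressiveness forces $[\vv]$ to be \emph{anti}parallel to the (clockwise, resp.\ counterclockwise) normal, i.e.\ $\vv_{\text{after}}=\vv_{\text{before}}-s\,\vn$ with $s>0$, and the defining MR sign condition $\vv\dotp\vn>0$. Writing an upper shock at outward-tangent angle $\alpha$, its clockwise normal is $\vn=(\sin\alpha,-\cos\alpha)$, and for a vector $\vv$ at angle $\theta=\arg\vv$ one has $\vv\dotp\vn=|\vv|\sin(\alpha-\theta)$, so the MR condition $\vv\dotp\vn>0$ is exactly $\alpha\in(\theta,\theta+180^\circ)$. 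This half-plane reformulation is the engine for (a), (c) and (e); the counterclockwise stem normal is $\vn_0=(-\sin\alpha_0,\cos\alpha_0)$.

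Part (b) is then immediate: since $\vv_0=(1,0)$, the condition $\vv_0\dotp\vn_0=-\sin\alpha_0>0$ forces $\sin\alpha_0<0$, i.e.\ the stem points into the open lower halfplane. For (a) I would use the Mach-stem jump $\vv_c=\vv_0-s_0\vn_0$ with $s_0>0$, together with $\vv_c\dotp\vn_0>0$, which gives $s_0<\vv_0\dotp\vn_0=-\sin\alpha_0$; substituting, the first component of $\vv_c$ is $1+s_0\sin\alpha_0>1-\sin^2\alpha_0=\cos^2\alpha_0\ge0$, and the boundary case $\alpha_0=-90^\circ$ is handled by $s_0<1$, so $\vv_c$ lies in the open right halfplane.

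For (c) I would run a single induction on $j=1,\dots,i$ with hypothesis ``$\vv_{j-1}$ lies in the closed lower-right quadrant ($v_{j-1,1}>0$, $v_{j-1,2}\le0$)'', true for $j=1$ since $\vv_0=(1,0)$. For a shock in the open upper-left quadrant, $\cos\alpha_j<0$ and $\sin\alpha_j>0$, so the continuous tangential velocity $w_j=\vv_{j-1}\dotp\vt_j=v_{j-1,1}\cos\alpha_j+v_{j-1,2}\sin\alpha_j$ is a sum of two nonpositive terms with the first strictly negative, hence $w_j<0$: shock $j$ is backward. Then $v_{j,2}=v_{j-1,2}+s_j\cos\alpha_j<0$, while the MR bound $s_j<\vv_{j-1}\dotp\vn_j$ yields, after the one-line identity $v_{j,1}>\cos\alpha_j\,w_j=(\text{neg})(\text{neg})>0$, that $\vv_j$ sits in the open lower-right quadrant; this closes the induction and proves the ``backward'' and ``lower-right'' clauses. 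The angular-velocity clause then comes for free: any point of the open upper-left quadrant lies in some state $m\le i$, and either $m=0$ (where $\vv_0\dotp\hat\phi=-\sin\phi<0$) or $m\ge1$, in which case $\theta_m\in(-90^\circ,0^\circ)$ from the induction and $\alpha_m<\theta_m+180^\circ$ from $\vv_m\dotp\vn_m>0$, so its polar angle $\phi\in(90^\circ,\alpha_m)\subset(\theta_m,\theta_m+180^\circ)$ gives $\vv_m\dotp\hat\phi=|\vv_m|\sin(\theta_m-\phi)<0$, i.e.\ clockwise.

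Finally (e) and the orientation half of (d) both rest on the same half-plane fact applied to the two shocks bounding a state: for state $j$ the conditions $\vv_j\dotp\vn_j>0$ and $\vv_j\dotp\vn_{j+1}>0$ place both $\alpha_j$ and $\alpha_{j+1}$ in the common open semicircle $(\theta_j,\theta_j+180^\circ)$, so the convex angle between consecutive upper shocks is $<180^\circ$. For (d) the contact is a material surface, whence $\vv_k\dotp\vn_c=\vv_c\dotp\vn_c=0$ and both velocities are parallel to $\vt_c$; writing $\vv_c=\lambda_c\vt_c$, the stem condition $\vv_c\dotp\vn_0>0$ together with the state-$c$ sector width $<180^\circ$ gives $\vt_c\dotp\vn_0=\sin(\alpha_c-\alpha_0)>0$ and hence $\lambda_c>0$, and symmetrically $\lambda_k>0$ using shock $k$, so $\vv_c$, $\vv_k$ and the outward contact direction $\vt_c$ coincide. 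The hard part, and the one place where purely local jump data are insufficient, is excluding \emph{reflex} sectors: the conditions $\vv_j\dotp\vn_j>0$, $\vv_j\dotp\vn_{j+1}>0$ are invariant under replacing the state-$j$ sector by its complement, so I must invoke the global bookkeeping --- that the $k+2$ rays are cyclically ordered with no other ray inside the state-$j$ gap and that state $0$ is the sector carrying $\vv_0=(1,0)$ --- to certify that each state occupies the convex rather than the reflex side; equivalently, a reflex sector would force $\vv_j$ to point radially inward somewhere, which the outgoing pattern forbids. This orientation check is the main obstacle and the one requiring genuine care.
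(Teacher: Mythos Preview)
Your treatment of (a), (b), (c) is correct and close to the paper's: the paper also derives (a) from the sub-$90^\circ$ angle between pre- and post-shock velocities, (b) from $\vv_0\dotp\vn_0>0$, and (c) by the same induction showing each $\vv_j$ lands in the open lower-right quadrant. Your computations for (a) are slightly more explicit but equivalent.

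The genuine gap is in (d) and (e), and your dependency structure is inverted relative to the paper's. Your half-plane argument correctly shows that $\alpha_j$ and $\alpha_{j+1}$ lie in a common open semicircle, hence the \emph{convex} angle between them is below $180^\circ$; but, as you note, nothing in the local jump data distinguishes the convex from the reflex side. Your proposed fix --- ``a reflex sector would force $\vv_j$ to point radially inward somewhere, which the outgoing pattern forbids'' --- is not a valid argument: nothing in the MR axioms prohibits radially inward velocity in a sector (indeed (c) itself places $\vv_j$ in the lower-right quadrant while the sector can contain upper-left directions). Worse, your proof of (d) invokes ``state-$c$ sector width $<180^\circ$'' and ``state-$k$ sector width $<180^\circ$'', but neither sector is bounded by two consecutive upper shocks, so (e) as stated would not cover them even if proved; and on the contact side you have $\vv\dotp\vn_c=0$ rather than $>0$, so the half-plane lemma does not apply there at all.

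The paper proceeds in the opposite order: (d) is proved \emph{first} and does not use (e). Instead the paper splits on whether the Mach stem lies in the lower-left or lower-right quadrant and, in each case, uses compressiveness plus (c) to pin down the direction of $\vv_c$ and then rule out the antipodal options for the contact and for $\vv_k$ by direct contradictions (in Case~2 an antipodal contact would sit in the upper-left quadrant, forcing all upper shocks there, whence (c) supplies nonzero angular velocity at the contact --- impossible). Only afterwards is (e) obtained, by a global $360^\circ$ budget: a reflex upper sector would push shock $i+1$ more than $270^\circ$ clockwise from the negative horizontal axis, leaving the contact (which must come later clockwise) outside the right half-plane, contradicting (a)+(d). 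So the reflex/convex disambiguation you flagged as the main obstacle is resolved not by a local orientation check but by this global angle-accounting, and that accounting requires (d) as input.
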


By (a)+(d) contact, $\vv_k$ and $\vv_c$ are in the right halfplane and parallel, 
so 
\begin{alignat*}{5} |\vv_k|-|\vv_c|>0 \qeq \text{clockwise circulation}. \end{alignat*}
Clockwise circulation is the ``standard'' case.

\subsection{Shock fan reduction by ``pressure sliding''}

\begin{figure}
  \input{plines.tex}
  \caption{Given $\vv_{k-1},\dens_{k-1},\pp_{k-1}$, the level sets of $\vv_k$ with equal $\pp_k$ are lines perpendicular to $\vv_{k-1}$;
    the $\dens_k$ level sets are circles (fig.\ \ref{fig:rhocircle})}
  \label{fig:plines}
\end{figure}
\begin{figure}
  \input{lowerchain.tex}
  \caption{A fan of shocks is reduced by combining the last two without changing the final pressure. The $\vv_{i-1}-\vv_i$ are downstream shock normals.}
  \label{fig:lowerchain}
\end{figure}
Pressure $\pp$ does not jump across contacts, so it is a good quantity for controlling the relationship between Mach stem and incident/reflected shocks. 
We use that the shock relations imply (cf.\ \eqref{eq:puu})
\begin{alignat}{5} \pp_k - \pp_{k-1} = \dens_{k-1}\vv_{k-1}\dotp(\vv_{k-1}-\vv_k) \label{eq:pukmo} \end{alignat} 
for shock $k$ and the analogous relation for the other shocks (note that the terminal density $\dens_k$ is absent).
This $\pp$ relation allows an intuitive way of thinking about downstream pressure (fig.\ \ref{fig:plines}): 
for fixed $\vv_{k-1},\dens_{k-1},\pp_{k-1}$ the level sets of $\vv_k$ with same $\pp_k$ are straight lines perpendicular to $\vv_{k-1}$.

This immediately suggests a reduction technique (fig.\ \ref{fig:lowerchain}): 
we may collapse shocks $k$ and $k-1$ into a single new ``shock'' without changing the terminal pressure $\pp_k$ by 
shifting $\vv_k$ along its level set onto the line through $\vv_{k-1}$ and $\vv_{k-2}$, to a new terminal velocity $\tilde\vv_{k-1}$. 
Then
\begin{alignat*}{1}
  \tilde\pp_{k-1} - \pp_{k-1} &= \dens_{k-1}\vv_{k-1} \dotp (\vv_{k-1}-\tilde\vv_{k-1}) 
  \\&= \dens_{k-2}\vv_{k-2} \dotp (\vv_{k-1}-\tilde\vv_{k-1}) 
\end{alignat*}
because $\dens_{k-1}\vv_{k-1}-\dens_{k-2}\vv_{k-2}$ is tangential to shock $k-1$, hence perpendicular to $\vv_{k-1}-\vv_{k-2}$ which is parallel to $\vv_{k-1}-\tilde\vv_{k-1}$ by choice of $\tilde\vv_{k-1}$. By adding the relation for shock $k-1$,
\begin{alignat*}{1}
  \pp_{k-1} - \pp_{k-2} &=\dens_{k-2}\vv_{k-2} \dotp (\vv_{k-2}-\vv_{k-1}),
\end{alignat*}
we verify that 
\begin{alignat*}{1}
  \tilde\pp_{k-1} - \pp_{k-2}
  &= \dens_{k-2}\vv_{k-2} \dotp (\vv_{k-2}-\tilde\vv_{k-1})
\end{alignat*}
so that the new combined ``shock'' satisfies the same $\pp$ relation. (We put ``shock'' in quotes since $\tilde\dens_{k-1}$ may be negative or undefined, which is harmless because like the original terminal density $\dens_k$ we avoid using it anywhere.)
Finally $\tilde\pp_{k-1}=\pp_k>\pp_{k-1}>\pp_{k-2}$ means $\vv_{k-1}$ is between $\tilde\vv_{k-1}$ and $\vv_{k-2}$ so that $\tilde\vv_{k-1}-\vv_{k-2}$ retains the same direction as $\vv_{k-1}-\vv_{k-2}$.

We repeat this reduction step with $\tilde\vv_{k-1},\vv_{k-2},\vv_{k-3}$ taking the place of $\vv_k,\vv_{k-1},\vv_{k-2}$, then with $\tilde\vv_{k-2},\vv_{k-3},\vv_{k-4}$ etc.; 
  after finitely many steps we have reduced the upper shocks to a single new ``shock'' connecting $\vv_0$ to $\tilde\vv_1$, with
\begin{alignat*}{5} \tilde\pp_1 = \pp_0 + \dens_0\vv_0\cdot(\vv_0-\tilde\vv_1) \overset{\vv_0=(1,0)}{=} \pp_0 + \dens_0(1-\tilde\ve^x_1), \end{alignat*} 
and by $\tilde\pp_1=...=\tilde\pp_{k-1}=\pp_k=\pp_c$ 
the remaining relation 
\begin{alignat*}{1}
  \pp_c=\pp_0+\dens_0\vv_0\dotp(\vv_0-\vv_c)=\pp_0+\dens_0(1-\vx_c)
\end{alignat*}
across the Mach stem yields $\tilde \ve^x_1=\vx_c$. 
So if we can control the change from $\ve^x_k$ to $\tilde\ve^x_1$ during the reduction steps we have control over $\vv_k-\vv_c$.

To this end we calculate that $\ve^x$ can only decrease: in the first reduction step 
\begin{alignat}{5} \tilde\vv_{k-1} = \vv_k - t \vv_{k-1}^\perp \label{eq:tdef} \end{alignat}
($\perp$ rotation counterclockwise by $90^\circ$) which is on the line through $\vv_{k-1},\vv_{k-2}$ if
\begin{alignat*}{5} 0 &\overset!= (\tilde\vv_{k-1}-\vv_{k-1})^\perp\dotp(\vv_{k-2}-\vv_{k-1}), \end{alignat*}
so necessarily
\begin{alignat*}{5} t &= \frac{(\vv_{k-1}-\vv_k)^\perp\dotp(\vv_{k-2}-\vv_{k-1})}{\vv_{k-1}\dotp(\vv_{k-2}-\vv_{k-1})}. \end{alignat*}
The denominator is positive since $\vv_{k-2}-\vv_{k-1}$ is a downstream normal, by compressiveness.
The numerator is positive since by Proposition \ref{prop:manya}(e) consecutive upper shocks are at angles less than $180^\circ$ so that 
the outward tangent $(\vv_{k-1}-\vv_k)^\perp$ of the last shock has positive dot product with the downstream normal $\vv_{k-2}-\vv_{k-1}$ of the previous. 
(This remains true in later reduction steps since as mentioned $\tilde\vv_{k-1}-\vv_{k-2}$ has the same direction as $\vv_{k-1}-\vv_{k-2}$, etc.)
So $t>0$, and since \eqref{eq:tdef} shows
\begin{alignat*}{5} \tilde\ve^x_{k-1} = \vx_k + t \vy_{k-1}, \end{alignat*} 
we find that $\vy_{k-1}\leq 0$ would imply $\tilde\ve^x_{k-1}\leq\vx_k$.
In fact $\vx$ \emph{must} decrease during reduction if all $\vy_i$ are nonpositive and least one is negative.
Then $\vx_k>\tilde\ve^x_1$ which was shown $=\vx_c$ above, so we have proven (recall $\vv_k,\vv_c$ have the same direction, by Prop.\ \ref{prop:manya}(d)):
\begin{proposition}
  \label{prop:backward-p}%
  Consider an MR with compressive shocks and vertical velocities $\vy_1,...,\vy_{k-1}$ between the upper shocks nonpositive and not all zero.
  Then $|\vv_k|>|\vv_c|$ (clockwise circulation). 
\end{proposition}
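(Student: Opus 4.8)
The plan is to read the conclusion straight off the pressure-sliding reduction developed above; the only genuinely new work is to follow the horizontal component $\vx$ through the successive collapses and then convert the resulting inequality into a statement about speeds. First I would pin down the two endpoints of the reduction. It replaces the entire upper shock fan by one effective ``shock'' from $\vv_0$ to $\tilde\vv_1$, and the Mach-stem relation $\pp_c=\pp_0+\dens_0(1-\vx_c)$ together with $\tilde\pp_1=\pp_c$ forces $\tilde\ve^x_1=\vx_c$. This identity is the bridge tying the reduced fan to the contact, so it suffices to control the passage from $\vx_k$ to $\tilde\ve^x_1$.

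Second I would iterate the one-step formula \eqref{eq:tdef}. In the collapse whose middle state is $\vv_i$, the new terminal velocity differs from the old by $-t\,\vv_i^\perp$, the slide being perpendicular to $\vv_i$; since $\vv_i^\perp=(-\vy_i,\vx_i)$, the horizontal component changes by exactly $t\,\vy_i$ with $t>0$. Accumulating these increments over $i=k-1,k-2,\ldots,1$ gives
\begin{alignat*}{1}
  \vx_c-\vx_k=\tilde\ve^x_1-\vx_k=\sum_{i=1}^{k-1} t_i\,\vy_i,
\end{alignat*}
a sum of nonpositive terms once every $\vy_i\leq 0$, and strictly negative as soon as one $\vy_i<0$. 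Hence $\vx_k>\vx_c$ under the stated hypotheses.

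The sign bookkeeping for the $t_i$ is where I expect the main obstacle to sit. In the first collapse \eqref{eq:tdef} yields
\begin{alignat*}{1}
  t=\frac{(\vv_{k-1}-\vv_k)^\perp\dotp(\vv_{k-2}-\vv_{k-1})}{\vv_{k-1}\dotp(\vv_{k-2}-\vv_{k-1})},
\end{alignat*}
whose denominator is positive by compressiveness ($\vv_{k-2}-\vv_{k-1}$ being a downstream normal) and whose numerator is positive because, by Proposition \ref{prop:manya}(e), consecutive upper shocks span less than $180^\circ$, so the outward tangent $(\vv_{k-1}-\vv_k)^\perp$ of the top shock meets the downstream normal of the next shock at an acute angle. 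Care is needed because after a collapse the configuration has changed; here I would lean on the earlier observation that $\tilde\vv_{k-1}-\vv_{k-2}$ retains the direction of $\vv_{k-1}-\vv_{k-2}$, so the combined ``shock'' occupies the same angular slot as the shock it replaced and the $<180^\circ$ condition is inherited. A short induction on the number of collapses then keeps every $t_i>0$, which is exactly what the previous paragraph needs.

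Finally I would close using Proposition \ref{prop:manya}(a) and (d): $\vv_k$ and $\vv_c$ both lie in the open right halfplane and point in one common direction, so writing each as a positive multiple of a single unit vector with positive $\vx$-component turns $\vx_k>\vx_c$ into $|\vv_k|>|\vv_c|$, i.e.\ clockwise circulation, as claimed.
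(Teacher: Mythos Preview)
Your proposal is correct and follows essentially the same approach as the paper: use the pressure-sliding reduction to obtain $\tilde\ve^x_1=\vx_c$, track the horizontal component through each collapse via $\tilde\ve^x=\vx+t\,\vy_i$ with $t>0$ (positivity from compressiveness for the denominator and from Proposition~\ref{prop:manya}(e) for the numerator, with the direction-preservation of $\tilde\vv_{k-1}-\vv_{k-2}$ carrying the $<180^\circ$ condition through the induction), and then convert $\vx_k>\vx_c$ into $|\vv_k|>|\vv_c|$ using Proposition~\ref{prop:manya}(a),(d). The only cosmetic difference is that you package the iterated collapses into a single sum $\sum_i t_i\,\vy_i$, whereas the paper argues step by step.
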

(``Not all zero'' is automatic for $k\geq 3$, but not $k=2$ case as example \ref{ex:zerocircposrjump} shows). 
By Proposition \ref{prop:manya}(c) we immediately obtain:
\begin{theorem}
  \label{th:circ-backward}%
  If in a compressive $k+1$ MR every incident shock is backward, then the circulation on the contact is clockwise. 
\end{theorem}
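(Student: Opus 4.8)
The plan is to reduce Theorem \ref{th:circ-backward} to Proposition \ref{prop:backward-p}, whose conclusion $|\vv_k|>|\vv_c|$ is precisely clockwise circulation (recall from Proposition \ref{prop:manya}(a),(d) that $\vv_k$ and $\vv_c$ are parallel and in the open right halfplane, so the sign of $|\vv_k|-|\vv_c|$ decides the rollup direction). Proposition \ref{prop:backward-p} delivers exactly this conclusion for a compressive MR once we know that the intermediate vertical velocities $\vy_1,\dots,\vy_{k-1}$ are nonpositive and not all zero. Since the theorem already assumes compressiveness, the entire task is to produce those sign conditions from the hypothesis that every incident shock $1,\dots,k-1$ is backward.

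First I would translate ``backward'' into a statement about position. A backward shock has outward tangential velocity, and starting from $\vv_0=(1,0)$ and using the compressiveness-driven turning of the flow I would verify that a backward upper incident shock must lie in the open upper left quadrant. This is the converse of the implication recorded in Proposition \ref{prop:manya}(c) (which runs upper-left $\Rightarrow$ backward), so the two descriptions coincide for the shocks at hand. With all of shocks $1,\dots,k-1$ thus placed in the open upper left quadrant, Proposition \ref{prop:manya}(c) applies with $i=k-1$ and gives that the downstream states $\vv_1,\dots,\vv_{k-1}$ all lie in the open lower right quadrant. In particular every vertical component satisfies $\vy_i<0$, hence $\vy_1,\dots,\vy_{k-1}$ are strictly negative, a fortiori nonpositive and not all zero (this strictness also disposes of the $k=2$ borderline noted after Proposition \ref{prop:backward-p}). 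Note that no condition on the reflected shock $k$ or on $\vy_k$ is needed, since Proposition \ref{prop:backward-p} constrains only the states between the incident shocks.

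Applying Proposition \ref{prop:backward-p} then yields $|\vv_k|>|\vv_c|$, i.e.\ clockwise circulation, completing the proof. The one genuinely non-formal link, and where I expect to spend the most care, is the identification of ``backward'' (outward tangential velocity) with membership in the open upper left quadrant: everything else is a direct assembly of the two cited propositions. I would therefore either establish this equivalence as a short kinematic lemma or, if the definition of backward is taken to mean precisely the upper-left placement, simply observe that the hypothesis puts shocks $1,\dots,k-1$ into the scope of Proposition \ref{prop:manya}(c) and quote it.
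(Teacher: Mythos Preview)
Your proposal is correct and follows the paper's route: the paper's own proof of Theorem~\ref{th:circ-backward} is literally the single sentence ``By Proposition~\ref{prop:manya}(c) we immediately obtain,'' placed right after Proposition~\ref{prop:backward-p}. So your reduction---use (c) to force $\vy_1,\dots,\vy_{k-1}<0$, then feed this into Proposition~\ref{prop:backward-p}---is exactly the intended argument, and you are right that the only nontrivial link is matching the hypothesis ``backward'' to the hypothesis of (c). The paper simply treats this as immediate; your plan to spell it out as a short kinematic lemma is the honest thing to do, and the inductive mechanism in the Appendix proof of (c) (backward $+$ compressive puts the downstream velocity in the angle clockwise from the upstream velocity to the inward tangent) is precisely the tool you need.
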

This theorem is especially useful because in many concrete applications the incident shocks are naturally backward, propagating downstream towards an interaction point.

\subsection{Weak Lax condition}

Although examples \ref{ex:zerocircposrjump} and \ref{ex:negcircposrjump} look natural, they have one feature that 
--- as we show now --- cannot be improved: $\vv_1\dotp\vn_1 > \vv_1\dotp\vn_2$. 
So the incident shock has downstream normal velocity higher than the upstream normal velocity at the following reflected shock, 
but that would be incompatible with the Lax condition since the sound speed $\csnd$ is constant between the shocks:
downstream of shock 1 we need
\begin{alignat*}{5} \csnd_1 > \vv_1\dotp\vn_1, \end{alignat*} 
and upstream of shock 2 we need
\begin{alignat*}{5} \vv_1\dotp\vn_2 > \csnd_1, \end{alignat*} 
which combines to a \defm{weak Lax condition}: 
\begin{alignat}{5} \vv_1\dotp\vn_1 < \vv_1\dotp\vn_2. \label{eq:weaklax}\end{alignat} 
\begin{theorem}
  \label{th:vyneg-poscirc}%
  If the upper shocks in a compressible $k+1$ MR satisfy the weak Lax condition \eqref{eq:weaklax} (in particular if they satisfy the standard Lax condition), then the incident shocks are backward and 
  Proposition \ref{prop:backward-p}(b) applies, showing in particular circulation on the contact is clockwise.
\end{theorem}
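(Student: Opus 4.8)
The plan is to reduce the statement to the hypotheses of Proposition~\ref{prop:backward-p} by showing that the weak Lax condition forces the intermediate vertical velocities $\vy_1,\dots,\vy_{k-1}$ to be nonpositive and not all zero. First I would record that the derivation of \eqref{eq:weaklax} preceding the theorem is local to a single intermediate state and therefore applies verbatim to every adjacent pair of upper shocks: since the sound speed $\csnd_i$ is constant throughout state $i$, the Lax condition gives $\vv_i\dotp\vn_i<\csnd_i$ downstream of shock $i$ and $\vv_i\dotp\vn_{i+1}>\csnd_i$ upstream of shock $i+1$, hence $\vv_i\dotp\vn_i<\vv_i\dotp\vn_{i+1}$ for every $i=1,\dots,k-1$. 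This also settles the parenthetical claim, since the standard Lax condition is exactly what is used here.

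Next I would make the geometry explicit. Writing the clockwise unit normal of shock $i$ as $\vn_i=(\sin\alpha_i,-\cos\alpha_i)$, with $\alpha_i$ the outward tangent angle and $\alpha_1>\dots>\alpha_k$ by the clockwise ordering, compressiveness gives $\vv_{i-1}-\vv_i=\lambda_i\vn_i$ with $\lambda_i>0$, so that $\vy_i=\vy_{i-1}+\lambda_i\cos\alpha_i$ with $\vy_0=0$. The weak Lax inequality rearranges to $\vv_i\dotp(\vn_{i+1}-\vn_i)>0$, and a sum-to-product computation gives $\vn_{i+1}-\vn_i=-2\sin\tfrac{\alpha_i-\alpha_{i+1}}2\,(\cos\bar\alpha_i,\sin\bar\alpha_i)$ with $\bar\alpha_i:=\tfrac{\alpha_i+\alpha_{i+1}}2$. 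Proposition~\ref{prop:manya}(e) makes $\sin\tfrac{\alpha_i-\alpha_{i+1}}2>0$, so the condition becomes the clean ``inward of the bisector'' statement $\vv_i\dotp(\cos\bar\alpha_i,\sin\bar\alpha_i)<0$.

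I would then prove $\vy_i\le0$ for $i=1,\dots,k-1$ by induction, splitting on the direction of shock $i$. If $\alpha_i\ge 90^\circ$ then $\cos\alpha_i\le0$ and $\vy_i=\vy_{i-1}+\lambda_i\cos\alpha_i\le\vy_{i-1}\le0$. If instead $\alpha_i<90^\circ$ then $\alpha_{i+1}<\alpha_i<90^\circ$, so $\bar\alpha_i<90^\circ$ and $\cos\bar\alpha_i,\sin\bar\alpha_i>0$; feeding $\vx_i>0$ (from Proposition~\ref{prop:manya}, which places every state velocity in the open right halfplane) into the bisector inequality forces $\vy_i\sin\bar\alpha_i<-\vx_i\cos\bar\alpha_i<0$, hence $\vy_i<0$. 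Either way $\vy_i\le0$. The same dichotomy at $i=1$ with $\vy_0=0$ even yields $\alpha_1>90^\circ$, since a forward shock~$1$ would give $\vy_1=\lambda_1\cos\alpha_1>0$, contradicting the forced $\vy_1<0$; thus shock~$1$ is backward and $\vy_1<0$, which supplies the ``not all zero'' clause. Proposition~\ref{prop:backward-p} then delivers $|\vv_k|>|\vv_c|$, i.e.\ clockwise circulation.

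The main obstacle is the input $\vx_i>0$: since $\vx_i=\vx_{i-1}-\lambda_i\sin\alpha_i$ with $\sin\alpha_i>0$, the horizontal velocity strictly decreases across each upper shock, so one must extract from Proposition~\ref{prop:manya} that it cannot reach $0$ before the contact. A secondary subtlety is upgrading ``$\vy_i\le0$'' to the literal assertion that \emph{every} incident shock is backward: the per-pair argument above forces this only for shock~$1$, and for the later incident shocks one genuinely needs to combine all the bisector inequalities with the quadrant and monotone-ordering structure of Proposition~\ref{prop:manya}(a),(c),(e) rather than the single-state computation alone. I expect this bookkeeping, together with the orientation conventions for $\vn_i$ relative to the clockwise numbering, to be where the care is needed; the circulation conclusion itself, however, requires only the $\vy_i\le0$ estimate described above.
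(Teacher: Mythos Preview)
Your route is genuinely different from the paper's. You attempt a direct induction on the vertical velocities $\vy_i$ via the explicit ``bisector'' reformulation of the weak Lax condition, whereas the paper argues by contradiction in terms of \emph{angular velocity}: it observes that $\vv_j\cdot\vn$ at a ray of angle $\theta$ is exactly the clockwise angular component of the constant state $\vv_j$, assumes some incident shock $j$ is forward (downstream velocity radially outward), and shows that for the clockwise angular velocity to \emph{exceed} its value at $\alpha_j$ (as weak Lax demands at $\alpha_{j+1}$) the ray would have to sweep more than $180^\circ$, contradicting Proposition~\ref{prop:manya}(e). That argument never needs $\vx_i>0$ or any sign information on $\sin\bar\alpha_i$, and it directly yields the ``incident shocks are backward'' clause, after which Theorem~\ref{th:circ-backward} finishes.

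Your induction has a real gap in the case $\alpha_i<90^\circ$: you assert $\sin\bar\alpha_i>0$, but nothing so far forces $\alpha_{i+1}>-\alpha_i$. Upper shocks are not required to sit in the upper halfplane; only the contact is pinned to the open right halfplane by Proposition~\ref{prop:manya}(a),(d), so $\alpha_{i+1}$ can be negative enough to make $\bar\alpha_i\le 0$, at which point your inequality $\vy_i\sin\bar\alpha_i<-\vx_i\cos\bar\alpha_i$ flips sign. This is patchable (when $\alpha_{i+1}<0$ the bare positivity $\vv_i\cdot\vn_{i+1}>0$ already forces $\vy_i<0$), but the patch leans on $\vx_i>0$, and that is your second gap: Proposition~\ref{prop:manya} does \emph{not} place every intermediate $\vv_i$ in the right halfplane. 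Part~(a) concerns only $\vv_c$, and part~(c) gives $\vv_i$ in the lower right quadrant \emph{only} for shocks already known to lie in the upper left quadrant --- precisely the backwardness you are trying to establish, so the appeal is circular. The paper's angular-velocity contradiction avoids both issues simultaneously and delivers the full ``backward'' conclusion rather than just the $\vy_i\le 0$ estimate.
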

\begin{proof}
  Assume the incident shocks are not all backward. 
  Of the forward ones consider the first in clockwise direction. 
  The weak Lax condition requires the clockwise angular velocity on the upstream side of the following shock to be strictly larger. 
  But a forward shock has \emph{outward} radial downstream velocity, 
  which means clockwise angular velocity \emph{de}creases when moving in clockwise direction; the velocity becomes outward purely radial velocity before the next shock. 
  From there the velocity vector would have to rotate more than $180^\circ$ 
  to turn inward purely radial and then return to positive and increasing clockwise angular velocity. This requires a sector of equal clockwise angle to be traversed, 
  contradicting
  Proposition \ref{prop:manya}(e).
  Hence all incident shocks must be backward, so theorem \ref{th:circ-backward} applies. 
\end{proof}

\subsection{Density jump}
\label{section:thermo}

Finally we wonder whether the weak Lax condition can guarantee that the density jump $\dens_k-\dens_c$ is positive, 
as it is for polytropic $\gisen>1$ equation of state for example,
but that is not generally true:
\begin{example}
\begin{alignat*}{5} \vv_0 &= (1,0) ,\quad &\vv_1 &= (\frac89,-\frac19) ,\quad &\vv_2 &= (\frac14,\frac15) ,\quad &\vv_c &= \frac{16}{35} \vv_2 \\
\pp_0 &= 1 ,\quad &\pp_1 &= \frac{10}{9} ,\quad &\pp_2 &= \frac{66}{35} & \pp_c &= \pp_2 \\
\dens_0 &= 1 ,\quad &\dens_1 &= \frac97 ,\quad &\dens_2 &= \frac{19520}{2457} ,\quad &\dens_c &= \frac{27125}{2844} 
\end{alignat*} 
with outward tangents of discontinuities at angles
\begin{alignat*}{5} \alpha_1 &\approx 135^\circ \csep \alpha_2 \approx 64^\circ \csep \alpha_c \approx 39^\circ \csep -96^\circ . \end{alignat*} 
Clearly the density jump $\dens_2-\dens_c$ is \emph{negative}. Moreover the shocks are obviously compressive, and
\begin{alignat*}{5} \vv_1\dotp\vn_1 = 0.55... < 0.84... < \vv_1\dotp\vn_2 \end{alignat*} 
so that the weak Lax condition is satisfied. (In fact $(\pp_1-\pp_0)/(\idens_1-\idens_0) > (\pp_2-\pp_1)/(\idens_2-\idens_1)$, so 
there is no obstacle to defining an equation of state with genuine nonlinearity $(\partial^2\pp/\partial\idens^2)_{\spm}>0$.)
\end{example}
(It is also possible to give examples \emph{without} $\dens$ jump.)

To keep the density jump positive, and to derive information about the across-contact jumps of other quantities, 
it is finally necessary to employ full thermodynamics:

Across full Euler (steady) shocks we note the Bernoulli relation: 
with $\hpm=\epm+\pp\idens$ enthalpy per mass, 
\begin{alignat*}{5} [\half|\vv|^2 + \hpm] = 0; \end{alignat*} 
summing over all shocks we obtain the corresponding relation
\begin{alignat*}{5} \half|\vv_k|^2 + \hpm_k = \half|\vv_c|^2 + \hpm_c \end{alignat*} 
across the contact. If circulation is clockwise, i.e.\ $|\vv_c|^2<|\vv_k|^2$, then necessarily
\begin{alignat*}{5} \hpm_c > \hpm_k . \end{alignat*} 
On the other hand $\pp_c=\pp_k$, so we may use $\hpm,\pp$ to study the jump for any other thermodynamic variable,
by studying their behaviour on the isobars (curves of constant $\pp$). 
\begin{alignat*}{5} d\hpm = 
\Temp~d\spm + \idens~d\pp, \end{alignat*} 
so $ \dati{\hpm}{\spm}{\pp} = \Temp > 0 $
which immediately implies 
\begin{alignat*}{5} \spm_c > \spm_k \end{alignat*} 
for positive circulation, the reverse for negative circulation and zero entropy jump if there is no velocity jump either.
Similar observations entered the thermodynamic arguments of \cite{henderson-menikoff}.

However, the jump of $\idens$ cannot be controlled by the jump of $\spm$ or other variables in such a simple manner: 
\begin{alignat*}{5} \dat\idens\spm\pp = - \dat\pp\spm\idens \dat\pp\idens\spm^{-1} ; \end{alignat*} 
although sound speed
\begin{alignat*}{5} \csnd^2 = - \idens^2 \dat\pp\idens\spm  \end{alignat*} 
may of course reasonably be assumed to be positive,
there is no \defm{fundamental} reason why $\dati\pp\spm\idens$ 
should be positive. 
In fact it may well be negative: some calculation shows 
\begin{alignat*}{5} \dat\pp\spm\idens = (\epm_{\idens\idens}\epm_{\spm\spm}-\epm_{\idens\spm}^2) \dat\idens\Temp\pp . \end{alignat*} 
The determinant may be assumed to be positive by the \defm{thermodynamic inequalities}, i.e.\ positive definite Hessian of $\epm=\epm(\idens,\spm)$.
But the last factor is the coefficient of thermal expansion at constant pressure, 
which is well-known to be negative for some materials in some temperature and pressure regimes, especially near phase transitions. 
Besides, this section is based on an assumption of equilibrium which cannot be made in case of metastability, say for chemically reacting mixtures.

\section{Potential flow}
\label{section:potentialflow}

\subsection{Background}

We recall briefly how and where compressible potential flow arises; some readers may wish to skip to the following section.

The Euler equations 
\begin{alignat}{5} 0 &= \dens_t + \ndiv(\dens\vv), \label{eq:masst}\\
0 &= (\dens\vv)_t + \ndiv(\dens\vv\otensor\vv) + \nabla\pp, \label{eq:momt}\\
0 &= (\dens(\epm+|\vv|^2/2))_t + \ndiv((\dens(\epm+|\vv|^2/2)+\pp)\vv) \end{alignat} 
have smooth solutions satisfying a transport law 
\begin{alignat*}{5} 0 &= \spm_t + \vv\dotp\nabla\spm, \end{alignat*} 
so that $\spm$ remains constant if it was initially. For constant $\spm$ we obtain the \defm{isentropic Euler equations} 
(sometimes called ``adiabatic'' or ``barotropic'', although those names are also used for related systems)
\begin{alignat}{5} 0 &= \dens_t + \ndiv(\dens\vv), \label{eq:isenmass} \\
0 &= (\dens\vv)_t + \ndiv(\dens\vv\otensor\vv) + \nabla\pp. \label{eq:isenmom}\end{alignat}
with $\pp=\ppf(\dens)$.
\emph{Weak} (distributional) solutions of isentropic Euler are not weak solutions of the full Euler equations, but shocks with small strength are close approximations of full Euler shocks. 
Every theorem from sections \ref{section:zerocirc} and \ref{section:circulation} 
applies directly to isentropic Euler since we avoided using the conservation of energy equation. 

From \eqref{eq:isenmass} and \eqref{eq:isenmom} with $\dens^{-1}\dati\ppf\dens\spm=\dati\hpmf\dens\spm$ 
we obtain 
\begin{alignat}{5} 0 &= \vv_t + \vv\dotp\nabla\vv + \nabla\hpm. \label{eq:vvt}\end{alignat} 
Taking the curl we see that smooth (and vacuum-free) isentropic Euler solutions also satisfy a transport law for vorticity $\vort=\ncurl\vv$:
\begin{alignat*}{5} 0 &= (\frac{\vort}{\dens})_t + \vv\dotp\nabla\frac{\vort}{\dens}, \end{alignat*} 
so if $\vort$ is zero initially it remains so for all time. Then $\vv = \nabla\vpot$
for a scalar \defm{velocity potential} $\vpot$. 
This turns \eqref{eq:vvt} into 
\begin{alignat*}{5} 0 &= \nabla\vpot_t + \nabla\vpot\dotp\nabla^2\vpot + \nabla\hpm = \nabla(\vpot_t + \half|\nabla\vpot|^2 + \hpm), \end{alignat*}
resulting in the the \defm{Bernoulli relation}
\begin{alignat*}{5} \vpot_t + \half|\nabla\vpot|^2 + \hpmf(\dens) = 0 \end{alignat*}
where the right-hand side is made zero by adding some irrelevant $f(t)$ to $\vpot$. \defm{Potential flow} is the system
\begin{alignat}{5} 
\dens &= \hpmf^{-1}(-\vpot_t -\half|\nabla\vpot|^2)  , \\
0 &= \dens_t + \ndiv(\dens\vv) .
\label{eq:densvpot}\end{alignat} 
Smooth potential flow solutions are full Euler solutions,
but again weak solutions (which require that $\vpot$ is continuous, but not necessarily its derivatives)
need not be, although shocks of small strength are still close approximations.
Admissibility conditions are discussed in \cite{osher-hafez-whitlow,elling-potfadm} 
and references therein. 

Potential flow also arises in another important context: consider the nonlinear Schr\"odinger equation
\begin{alignat}{5} i\hnls\wavefn_t = -\frac{\hnls^2}{2}\Lap\wavefn + \Vpot\wavefn, \label{eq:nls}\end{alignat} 
where $\Vpot$ could be a function of $t,\xx$ or also $|\wavefn|$, for example
\begin{alignat*}{5} \Vpot &= |\wavefn|^{\pV-1}. \end{alignat*} 
Substitute the usual ansatz 
\begin{alignat*}{5} \wavefn = \amp e^{i\phs/\hnls} \end{alignat*} 
with real \defm{amplitude} $\amp$ and \defm{phase} $\phs$ to obtain
as real part the usual eikonal-type equation
\begin{alignat}{5} 0 = \phs_t + \half  |\nabla\phs|^2 + \Vpot - \half \hnls^2\frac{\Lap\amp}{\amp} \label{eq:sreal}\end{alignat}
and as imaginary part
\begin{alignat*}{5} 0 = \amp_t + \nabla\amp\dotp\nabla\phs + \half \amp \Lap\phs . \end{alignat*}
The role of the latter becomes clearer with $\amp=\sqrt{\dens}$ and $\phs=\vpot$, yielding
\begin{alignat*}{5} 0 = \dens_t + \dens\Lap\phs + \nabla\dens\dotp\nabla\vpot = \dens_t + \ndiv(\dens\nabla\vpot), \end{alignat*}
which is exactly the continuity equation, a conservation law that maintains $\int|\wavefn|^2d\xx=\const$ in time. 
Moreover $\vv=\nabla\phs$ is the \defm{group velocity} from the dispersion relation of \eqref{eq:nls};
taking $\nabla$ of the real part \eqref{eq:sreal} yields 
\begin{alignat}{5} 0 = \vv_t + \subeq{\nabla\vv\dotp\vv}{=(\vv\dotp\nabla)\vv} + \nabla\Vpot - \hnls^2\nabla\frac{\Lap\amp}{2\amp} . \label{eq:vvvan}\end{alignat} 
This equivalent representation of the Schr\"odinger equation also goes by \defm{Madelung fluid} (see \cite{madelung-1927}).
In the $\hnls\dnconv 0$ limit \eqref{eq:sreal} formally converges to the Bernoulli relation for potential flow; 
$\Vpot=C\amp^{\pV-1}$ plays the role of $\hpm=C\dens^{\gisen-1}$, with $2(\gisen-1)=\pV-1$. 
While the Euler equations generally arise as a vanishing dissipation limit, 
the vanishing term in \eqref{eq:vvvan} is a third-order operator with dispersive effects.
NLS also arises in nonlinear optics and water waves, for example; for discussion of Mach reflection in water wave models see \citep{yue-mei-1980,kodama-yeh-2016}.

The shock waves of potential flow satisfy the familiar mass conservation
\begin{alignat*}{5} [\dens\vv\dotp\vn] = 0 \end{alignat*} 
as well as the Bernoulli relation (equivalent to conservation of energy)
\begin{alignat*}{5} [\Hpm] = 0 \end{alignat*} 
with \defm{total enthalpy} $\Hpm=\hpm+\half|\vv|^2$, 
and conservation of tangential momentum in the form
\begin{alignat*}{5} [\vv\dotp\vt] = 0 \end{alignat*} 
(which follows from $[\vpot]=0$). 
However, normal momentum is \emph{not} conserved. (It is approximately conserved for weak shocks; 
in the Schr\"odinger context $\dens\nabla\vpot$ is only part of the momentum density which has an additional term $\vpot\nabla\dens$ with $O(\hnls)$ coefficient
that can be neglected only if $\nabla\dens$ does not become large.)

\subsection{Euler vs.\ potential flow}
\label{section:euler-potf}%

Assume we have already found values for the constant $\vv$ and $\dens$ between the shocks so that mass and tangential momentum conservation are satisfied across the shocks. 
To construct ``kinematic'' MR for full or isentropic Euler flow we still need to find pressures $\pp$ to satisfy conservation of normal momentum, 
\begin{alignat*}{5} 0 = [\dens(\vv\dotp\vn)^2+\pp], \end{alignat*} 
at each shock; summation over all shocks yields
the \defm{pressure compatibility relation} 
\begin{alignat*}{5} 0 &= \pp_k - \pp_c = \pp_k - \pp_{k-1} + ... + \pp_1 - \pp_0 + \pp_0 - \pp_c 
\\&= \sum_{i=0}^{k-1} \dens_i\vv_i\dotp(\vv_i-\vv_{i+1}) - \dens_0\vv_0\dotp(\vv_0-\vv_c) \end{alignat*} 
(where we eliminated normals as in \eqref{eq:puu}).
This relation puts an additional scalar constraint on $\vv,\dens$; if satisfied we have one solution for each choice of the free $\pp_0$. 

In contrast, for \emph{potential flow} we have to pick enthalpies $\hpm$ satisfying the Bernoulli relation at each shock:
\begin{alignat*}{5} 0 = [\hpm + \half|\vv|^2]. \end{alignat*} 
But here summation over all shocks yields a compatibility relation
\begin{alignat*}{5} \hpm_k - \hpm_c = \half ( |\vv_c|^2 - |\vv_k|^2 ). \end{alignat*} 
If we chose velocities so that $\vv_c=\vv_k$, then the right-hand side and therefore the left-hand side are automatically zero, 
but for potential flow $\hpm$ is linked one-to-one to $\dens$, so $\dens_k=\dens_c$ is \emph{automatic} now and so is $\pp_k=\pp_c$. 

Thus the pressure compatibility relation, as well as zero density jump, are \emph{implied} for potential flow, 
in contrast to Euler flow where zero circulation and zero density jump are two independent relations. 

Given $\dens_i,\hpm_i$, it suffices to interpolate the pairs to a function $\hpmf$ and to calculate a pressure function $\pp$
by the standard relation $\ppf_\dens=\dens\hpmf_\dens$ (entropy is constant).

\subsection{Density circles}
\label{section:denscircle}%

  \begin{figure}
      \input{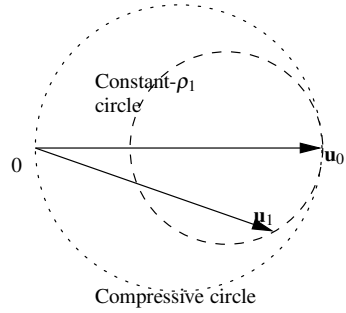}
      \caption{Small circle: given $\vv_0,\dens_0$ the level sets of $\vv_1$ with same $\dens_1$ are circles ($\pp_1$ level sets are lines (fig.\ \ref{fig:plines}));
        large circle: $\dens_1=\infty$; any compressive shock $\vv_1$ must be in the interior.}
      \label{fig:rhocircle}
  \end{figure}
  \begin{figure}
      \input{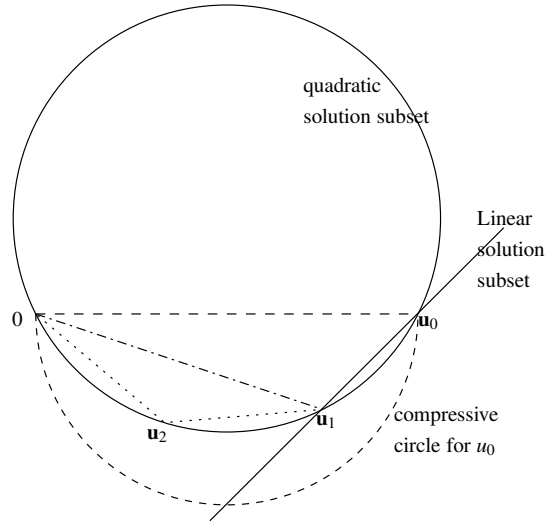}
      \caption{Solid circle: set of possible $\vv_2$ forming a potential flow triple shock with given $\vv_0,\vv_1$.}
      \label{fig:ueq}
  \end{figure}
  Hence it suffices to find values $\vv,\dens$ satisfying mass and tangential momentum conservation. 
But we only need to work for the former since the latter is automatically satisfied if we define shock normals (and locations) using $[\vv]$.

To this end we consider a given $\vv_0$ and try to find which $\vv_1$ achieve a certain desired compression ratio $\drat=\dens_1/\dens_0>0$ (see fig. \ref{fig:rhocircle}). Conservation of mass:
\begin{alignat*}{5} 0 = [\dens\vv\dotp\vn] \qeq \drat = \frac{\dens_1}{\dens_0} = \frac{\vv_0\dotp(\vv_0-\vv_1)}{\vv_1\dotp(\vv_0-\vv_1)}. \end{alignat*} 
This relation is essentially a quadratic equation for $\vv_1$, with quadratic part $|\vv_1|^2$, hence solution set essentially a circle. 
Obviously the circle must be mirror-symmetric across the $\vv_0$ axis, contain $\vv_0$ itself (corresponding to a vanishing shock), 
as well as the $\vv_1$ corresponding to a normal shock. 
So:
\begin{proposition}
  Given $\dens_0,\vv_0$ the level set of $\vv_1$ yielding same $\dens_1$ (for an Euler or potential flow shock) form a circle (see fig.\ \ref{fig:rhocircle}) with diameter the line segment
  from $\vv_0$ to $\dens_0\vv_0/\dens_1$ (the $\vv_1$ for a normal shock).
\end{proposition}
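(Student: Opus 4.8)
The plan is to convert the mass-conservation characterization displayed immediately above the proposition into a polynomial equation in $\vv_1$ and read the circle off directly. The common starting point for both Euler and potential flow is
\[ \drat = \frac{\vv_0\dotp(\vv_0-\vv_1)}{\vv_1\dotp(\vv_0-\vv_1)}, \]
which holds because in either model the shock normal is taken along $[\vv]=\vv_1-\vv_0$. First I would clear the denominator $\vv_1\dotp(\vv_0-\vv_1)$ and expand the dot products, collecting terms into
\[ \drat\,|\vv_1|^2 - (\drat+1)\,\vv_0\dotp\vv_1 + |\vv_0|^2 = 0. \]
Since $\drat=\dens_1/\dens_0>0$, the only second-order term is $\drat|\vv_1|^2$, so the locus is a circle (possibly degenerating to a point), which already settles the first assertion.

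To locate the circle I would complete the square, rewriting the equation as
\[ \Big|\vv_1 - \tfrac{\drat+1}{2\drat}\vv_0\Big|^2 = \Big(\tfrac{\drat+1}{2\drat}\Big)^2|\vv_0|^2 - \tfrac{1}{\drat}|\vv_0|^2 = \tfrac{(\drat-1)^2}{4\drat^2}\,|\vv_0|^2, \]
so the center is $\tfrac{\drat+1}{2\drat}\vv_0$ and the radius is $\tfrac{|\drat-1|}{2\drat}|\vv_0|$. (One could instead note that the cleared equation depends on $\vv_1$ only through $|\vv_1|^2$ and $\vv_0\dotp\vv_1$, both invariant under reflection across the line $\mathbb{R}\vv_0$, forcing the center onto that line; but the completion of the square gives center and radius outright.)

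The diameter claim is then immediate bookkeeping. The midpoint of $\vv_0$ and $\dens_0\vv_0/\dens_1=\vv_0/\drat$ is $\tfrac12\big(1+\tfrac1\drat\big)\vv_0=\tfrac{\drat+1}{2\drat}\vv_0$, which is exactly the center found above, while their separation is $|1-\tfrac1\drat|\,|\vv_0|=\tfrac{|\drat-1|}{\drat}|\vv_0|$, exactly twice the radius. Hence $\vv_0$ and $\dens_0\vv_0/\dens_1$ are antipodal points of the circle, i.e.\ the endpoints of a diameter, as claimed; I would also record that these correspond respectively to the vanishing shock and the normal shock marked in figure \ref{fig:rhocircle}.

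No step is a genuine obstacle; the argument is a one-line completion of the square followed by a two-point identification. The only place demanding a word of care is the denominator-clearing step: I must point out that $\drat\neq0$ is what keeps the quadratic term alive (so we obtain a circle, not a line), and that multiplying through by $\vv_1\dotp(\vv_0-\vv_1)$ can at worst adjoin the point $\vv_1=\vv_0$ where the mass flux vanishes — but that is precisely the vanishing-shock point already present on the circle, so no real solutions are lost. Finally I would note the degenerate case $\drat=1$, where the radius is zero and both endpoints collapse to $\vv_0$, consistent with the absence of any shock.
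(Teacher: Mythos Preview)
Your argument is correct and follows the same route as the paper: clear the denominator in the mass-conservation ratio to obtain a quadratic in $\vv_1$ whose leading part is $|\vv_1|^2$, hence a circle. The only difference is in the final identification of the circle: the paper argues geometrically (the equation is invariant under reflection across the $\vv_0$-axis, so the center lies on that axis, and the two obvious solutions $\vv_0$ and $\dens_0\vv_0/\dens_1$ on that axis must therefore be the diameter endpoints), whereas you complete the square explicitly and then check that those two points are antipodal. You even mention the symmetry shortcut in passing, so the two write-ups are essentially interchangeable; yours gives the center and radius as explicit formulas, the paper's is a line shorter.
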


\subsection{Triple-shock circles}

Consider given $\vv_0,\vv_1$ (not multiples of each other) and $\dens_0>0$ 
and and ask which $\vv_2=\vv_k$ are such that the downstream density $\dens_2$ of shock $2$ equals the downstream density $\dens_c$ of shock $0$ (fig.\ \ref{fig:tripleshock} and\ \ref{fig:ueq}).
I.e.\ we want to solve 
\begin{alignat}{5} 1 = \frac{\dens_c}{\dens_2} = \frac{\dens_c}{\dens_0}~\frac{\dens_0}{\dens_1}~\frac{\dens_1}{\dens_2}
&= \frac{\vv_0\dotp\vn_0}{\vv_2\dotp\vn_0}~\frac{\vv_1\dotp\vn_1}{\vv_0\dotp\vn_1}~\frac{\vv_2\dotp\vn_2}{\vv_1\dotp\vn_2}. \label{eq:nn012}\end{alignat} 
Using again $\vn\parallel[\vv]$ across shocks this reduces to
\begin{alignat}{5} 1 = \frac{\vv_0\dotp(\vv_0-\vv_2)}{\vv_2\dotp(\vv_0-\vv_2)}~\frac{\vv_1\dotp(\vv_0-\vv_1)}{\vv_0\dotp(\vv_0-\vv_1)}~\frac{\vv_2\dotp(\vv_1-\vv_2)}{\vv_1\dotp(\vv_1-\vv_2)}. \label{eq:uu012}\end{alignat} 
This is essentially a cubic equation in $\vx_2$ and in $\vy_2$, which can of course be solved in brute-force fashion using Cardano formulas, 
but a more elegant argument inflicts less pain on the reader:

First, the $\vv_2$ solution set has one ``degenerate'' branch: $\vv_2$ collinear with $\vv_1,\vv_0$, corresponding to parallel shocks, i.e.\ same (or antipodal) normals $\vn_i$ 
so that \eqref{eq:nn012} is trivially satisfied.
Using this the cubic equation can be reduced to an easier quadratic one, whose solution set could be an ellipse, hyperbola, empty etc., but another trick avoids any pain
in showing it is a circle:

Perform an inversion $\zz=\vv/|\vv|^2$ which maps circles into circles (with lines regarded as extremal circles).
\eqref{eq:uu012} turns into \emph{itself} except for $\vv=\zz/|\zz|^2$ replaced by $\zz$ because
each $|\zz_i|$ and $|\zz_i-\zz_j|$ appears the same number of times in  numerator and denominator, cancelling completely. 

The $\zz$ form has the same trivial solution subset, i.e.\ $\zz_2$ on the line through $\zz_0,\zz_1$, but this line does not pass through the origin since we assumed $\vv_0,\vv_1$ and hence $\zz_0,\zz_1$ are linearly independent. Hence mapping this $\zz$ plane line back to the $\vv$ plane yields a \emph{proper} circle which contains $0$ (the $\zz$ line, being a line, ``contained infinity''). 
So we have found a second solution branch, the (unique) circle through $\vv_0,\vv_1$ and $0$ (see fig.\ \ref{fig:ueq}). 
Since circles satisfy a quadratic equation, and since we already found a trivial line subsolution set, the degree $3$ of equation \eqref{eq:uu012} does not permit additional solutions. In summary:
\begin{theorem}
  Nontrivial 2+1 MR of compressible (full) potential flow correspond one-to-one to distinct $\vv_0,\vv_1,\vv_2$ on a circle through the origin (see fig.\ \ref{fig:ueq}).
\end{theorem}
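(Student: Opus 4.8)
The plan is to reduce the existence question to the single scalar equation \eqref{eq:uu012} and then exhibit its complete solution set geometrically. Recall from Section~\ref{section:euler-potf} that for potential flow the choice $\vv_2=\vv_c$ forces zero pressure \emph{and} density jump across the contact automatically; a contact-free triple shock therefore exists exactly when the downstream density of shock $2$ equals that of the Mach stem, which after eliminating the normals via $\vn\parallel[\vv]$ is precisely \eqref{eq:uu012}. With $\vv_0,\vv_1$ fixed this is a cubic in $(\vx_2,\vy_2)$, so the entire task is to show that its zero set is one line together with one circle through the origin.

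First I would isolate the \emph{degenerate} branch: if $\vv_2$ is collinear with $\vv_0,\vv_1$ then the three normals $\vn_0,\vn_1,\vn_2$ are parallel or antipodal and every ratio in \eqref{eq:nn012} is trivially $1$, so the line through $\vv_0,\vv_1$ lies in the solution set. This contributes a linear factor to the cubic and is discarded, since parallel shocks are not a genuine triple point.

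The crux is to identify the remaining quadratic branch as a circle rather than an arbitrary conic, and here I expect the main difficulty to lie; dividing out the line and classifying the resulting conic by a discriminant computation would be unpleasant. Instead I would apply the inversion $\zz=\vv/|\vv|^2$ to all three velocities. The elementary identity $\vv_a\dotp(\vv_a-\vv_b)=\zz_b\dotp(\zz_b-\zz_a)/(|\zz_a|^2|\zz_b|^2)$ shows that each magnitude factor $|\zz_i|^2$ occurs equally often in the numerator and denominator of \eqref{eq:uu012} and cancels, so the equation is carried to \emph{itself} with $\vv$ replaced by $\zz$ (a short check confirms the signs work out as well). The transformed equation has the same degenerate branch, $\zz_2$ on the line through $\zz_0,\zz_1$; but $\vv_0,\vv_1$ linearly independent makes $\zz_0,\zz_1$ linearly independent, so this line misses the origin. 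Inverting it back --- a line off the origin becomes a circle through the origin, its point at infinity mapping to $0$ --- produces the proper circle through $\vv_0$, $\vv_1$ and $0$, which is the nondegenerate branch.

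Finally I would close by a degree count: the line (degree $1$) and the circle (degree $2$) exhaust the degree $3$ of \eqref{eq:uu012}, so there are no further solutions. Hence every nontrivial $\vv_2$ lies on the unique circle through $\vv_0,\vv_1,0$, and conversely each third distinct point of that circle solves \eqref{eq:uu012} and yields a triple shock, giving the asserted one-to-one correspondence. The inversion symmetry is what makes the circle structure transparent, replacing a conic classification by the image of a line under an inversion.
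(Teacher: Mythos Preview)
Your proposal is correct and follows essentially the same route as the paper: reduce to the cubic \eqref{eq:uu012}, peel off the degenerate collinear line, use the inversion $\zz=\vv/|\vv|^2$ to show the equation is self-invariant so that the line through $\zz_0,\zz_1$ (off the origin) inverts to the circle through $0,\vv_0,\vv_1$, and finish by the degree count. Your explicit identity $\vv_a\dotp(\vv_a-\vv_b)=\zz_b\dotp(\zz_b-\zz_a)/(|\zz_a|^2|\zz_b|^2)$ is a slightly more concrete way of phrasing the paper's cancellation observation, but the argument is the same.
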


By choosing $\vv_3,\vv_4,...$ on the circle $k+1$ MR can be constructed without additional effort, 
but of course for $k\geq 3$ this yields only a subvariety of all possible nontrivial MR. 
(In particular these MR have the special property that any pair of consecutive upper shocks can be turned into a $2+1$ MR by adding a new Mach stem, 
which is not possible for arbitrary $k+1$ MR.)

\subsection{Numerical examples}

As for the Euler MR problem there is of course no guarantee that the $\dens_i,\vv_i$ combinations and resulting $\hpm_i$ correspond to an equation of state that 
is at all physically reasonable. However, even \emph{polytropic} pressure functions 
with standard isentropic coefficients $\gisen$ permit examples of potential flow triple points. 
$\ppf(\dens)=\const\cdot\dens^\gisen$ (with $\gisen\neq1$) is via $\ppf_\dens/\dens=\hpmf_\dens$ equivalent to
\begin{alignat*}{5} \hpmf(\dens) = \frac{C}{2} \dens^{\gisen-1} ; \end{alignat*} 
the jump relation $\hpm_0+|\vv_0|^2/2=\hpm_1+|\vv_1|^2/2$ yields
\begin{alignat*}{5} C &= \frac{|\vv_0|^2-|\vv_1|^2}{\dens_1^{\gisen-1}-\dens_0^{\gisen-1}} \end{alignat*} 
and since the same holds with $0,1$ replaced by $1,2$ we see 
\begin{alignat*}{5} \frac{|\vv_0|^2-|\vv_1|^2}{\dens_1^{\gisen-1}-\dens_0^{\gisen-1}} &= \frac{|\vv_1|^2-|\vv_2|^2}{\dens_2^{\gisen-1}-\dens_1^{\gisen-1}}. \end{alignat*} 
We substitute $\dens_1,\dens_2$ by functions of $\vv_0,\vv_1,\vv_2,\dens_0$; now given $\vv_0,\vv_1$ we can vary $\vv_2$ along the triple shock circle and solve numerically for $\gisen$.
For special $\gisen$ we can fix them instead and solve for $\vv_2$:

\begin{example}
  \label{ex:gtwo}%
For $\gisen=2$ solving for $\vv_2$ from $\vv_1,\vv_0$ is a quartic equation with some clean numbers solutions, e.g.
\begin{alignat*}{5} \vv_0 &= (1,0) \csep & \vv_1 &= (\frac{14}{15},-\frac{2}{15}) \csep & \vv_2 &= (\frac{2}{3},-\frac13) \quad, \\
\dens_0 &= 1 \csep & \dens_1 &= \frac32 \csep & \dens_2 &= 3\quad, \\
\hpm_0 &= \frac19 \csep &\hpm_1 &= \frac16 \csep &\hpm_2 &= \frac13; \end{alignat*}
the corresponding outward shock tangents are 
\begin{alignat*}{5} \alpha_1 &\approx 153.4^\circ \csep & \alpha_2 &= 126.9^\circ \csep & \alpha_0 &\approx -45^\circ \end{alignat*} 
\end{example}

Generally numerical solutions for a variety of $\gisen$ are easily obtained:
\begin{example} 
  \label{ex:gall}%
For 
\begin{alignat*}{5} \vv_0 = (1,0) \csep \vv_1 &= (0.81,-0.27)  \end{alignat*} 
we have 
\begin{alignat*}{5}
\gisen &= 2,   \csep& \vv_2 &\approx  (0.763891087, -0.300401481), \\
\gisen &= 5/3, \csep& \vv_2 &\approx  (0.488493339, -0.371888491), \\
\gisen &= 7/5, \csep& \vv_2 &\approx  (0.283481624, -0.324994519), \\
\gisen &= 4/3, \csep& \vv_2 &\approx  (0.238714073, -0.301917762), \\ 
\gisen &= 1.01, \csep& \vv_2 &\approx (0.072809864, -0.150014313). 
\end{alignat*}
\end{example}
Generally, by choosing suitable $\vv_1,\vv_2$ it is possible to achieve $\gisen$ in the range from $-1$ to $3$. 
As $\vv_2$ passes along the circle from $0$ to $\vv_1$, $\gisen$ ranges from $1$ to some other limit that depends on $\vv_0,\vv_1$.
Pure triple shock MR not only exists, but appears to be ``generic'': 
small perturbations to $\vv_0,\vv_1$ generically allow finding a matching $\vv_2$ for the same $\gisen$. 

$\gisen$ outside the interval $\boi{-1}{3}$ do not seem attainable; such $\gisen$ are secondary from a gas dynamics perspective but relevant for nonlinear Schr\"odinger
for realizing a wider range of exponents $\pV$ for the potential $V(a)=a^{\beta-1}$.

It should be emphasized that although the triple shocks we found for potential flow may appear in MR, those Mach reflections would be unusual.
In fact since our $\vv_2$ is on the circle through $\vv_0,\vv_1,0$ it could \emph{never} be parallel to $\vv_0$, 
so such Mach reflections are unlikely to occur in ``continuous'' transition to regular reflection. 

Since $\vv_2$ cannot be parallel to $\vv_0$ the \defm{mechanical equilibrium criterion} 
(also ``von Neumann criterion''; see \citet[paragraph 45]{neumann-1943}, \cite{henderson-lozzi}) 
for transition between regular and Mach reflection does not have an obvious analogue for potential flow.
This was already noted in \citep{elling-sonic-potf,elling-detachment,elling-rrefl-lax}, 
but implicitly based on the flawed premise that potential flow does not permit pure triple shocks at all. 

In fact due to the location of $\vv_2$ on the circle through $0,\vv_1,\vv_0$ (fig.\ \ref{fig:ueq}) the reflected shock is necessarily backward, 
whereas a typical occurrence of Mach reflections is in oblique shocks reflected by a flow-parallel wall so that the reflected shock is forward, pointing downstream. 

This observation also applies to three \emph{weak} shocks in an \emph{Euler} MR. 
Although Euler does permit contacts, the contact becomes weak at a faster rate than the shocks. 
So potential flow MR is a very close approximation; the Euler reflected shock must be backward as well. 
Similar observations have already been made in past work on concrete steady or pseudo-steady MR in cases where a forward reflected shock is required. 

Of course potential flow may exhibit MR of Guderley type or other options as well 
(see \cite{guderley-1947,guderley}, also \cite{neumann-1943} paragraph 37 and 44, and \cite{vasiliev-kraiko,hunter-brio,hunter-tesdall,skews-ashworth}).

\section{Two+two interactions}
\label{section:twoplustwo}%

In this section we return to the full Euler equations. 
Having exhausted the discussion of k+1 shock interactions the obvious remaining question 
is the sign of circulation for slip lines generated by 2+2 and higher, i.e.\ at least two ``upper'' shocks with clockwise velocity and at least two ``lower'' shocks with counterclockwise velocity meeting in the same point. 2+2 interactions (fig.\ \ref{fig:fournocontacts}) are very common, e.g.\ in supersonic jet engines (\cite[fig.\ 7]{chongpeichen-tianyungao-jianhanliang-pof2019}, \cite[fig.\ 5(a1)]{kumar-aileni-rathakrishnan-pof2019}).
We show that hardly anything general can be said:
the contact produced can have clockwise, counterclockwise or zero circulation already for 2+2 interactions with polytropic $\gisen>1$ equation of state. 
In that case some small positive results can be asserted: if there is \emph{no} contact, then 2+2 is either symmetric across the flow axis if the shocks are sufficiently weak;
anti-symmetric 2+2 (meaning equal pressure ratios across \emph{antipodal} shocks) is also possible if strong shocks are permitted.

\subsection{Linearization around supersonic background}
\label{section:linsuper}

\begin{figure}
    \input{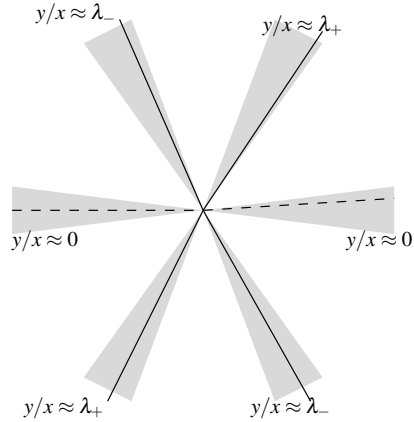}
    \caption{Weak shocks and contacts on a constant supersonic background}
    \label{fig:sectors}
\end{figure}
\begin{figure}
    \input{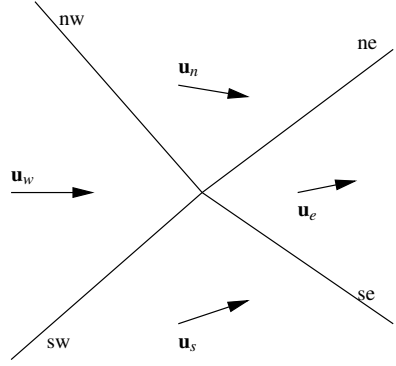}
    \caption{Four shocks without contacts}%
    \label{fig:fournocontacts}
\end{figure}
We first examine how much can be understood by merely linearizing the 2d full compressible Euler around a constant supersonic background $W=(\dens,\vx,\vy,\spm)$.
Let $\tilde W$ be the variation, then 
\begin{alignat*}{5} 
0 &= 
\subeq{\begin{bmatrix}
  \vx & \dens & 0 & 0 \\
  \csnd^2/\dens & u^x & 0 & \dati\pp\spm\dens/\dens \\
  0 & 0 & \vx & 0 \\
  0 & 0 & 0 & \vx
\end{bmatrix}}{\eqdef A^x(W)}
\begin{bmatrix}
  \tilde\dens \\
  \tilde\ve^x \\
  \tilde\ve^y \\
  \tilde\spm
\end{bmatrix}_x 
\\&+
\subeq{\begin{bmatrix}
  \vy & 0 & \dens & 0 \\
  0 & \vy & 0 & 0 \\
  \csnd^2/\dens & 0 & \vy & \dati\pp\spm\dens/\dens \\
  0 & 0 & 0 & \vy
\end{bmatrix}}{\eqdef A^y(W)}
\begin{bmatrix}
  \tilde\dens \\
  \tilde\ve^x \\
  \tilde\ve^y \\
  \tilde\spm
\end{bmatrix}_y.
\end{alignat*} 
As earlier we use rotational symmetry to take $\vy=0$ and $\vx>0$. If $\Mach=|\vv|/\csnd>1$, then $A^x$ is invertible and the eigenvalues of $(A^x)^{-1}A^y$ are 
\begin{alignat*}{5} \lambda_\pm = \pm\frac1{\sqrt{\Mach^2-1}} \csep \lambda_\spm=\lambda_\vort= 0 \end{alignat*} 
with right eigenvectors 
\begin{alignat*}{5} \rev_+ &= (-\Mach\dens,\csnd,-\csnd\sqrt{\Mach^2-1},0) ,\quad \rev_\spm = (-\dati\pp\spm\dens,0,0,\csnd^2) \\
\rev_- &= (\Mach\dens,-\csnd,-\csnd\sqrt{\Mach^2-1},0) ,\quad \rev_\vort = (0,1,0,0) \end{alignat*} 
which correspond in the nonlinear case to shocks/fans for $\pm$, shear/entropy contacts for $\vort,s$ 
(a more detailed analysis can be found in \citep{roberts-stss,elling-roberts-ii}).
$\lambda$ corresponds to the $y/x$ range at which the wave is located (fig.\ \ref{fig:sectors}.)

Let $\acr_\gtrless^*$ with $*=+,-,\spm,\vort$ be the strength of the $x\gtrless 0$ side $*$-wave (as a multiple of $\rev_*$). 
Clockwise summation yields $\sum\acr\rev=0$; in components with $\acr^i=\acr^i_<+\acr^i_>$:
\begin{alignat}{5} \Mach\dens(\acr^--\acr^+)-\acr^\spm &= 0 \\
\csnd(\acr^+-\acr^-)+\acr^\vort &= 0 \\
-\csnd\sqrt{\Mach^2-1}(\acr^++\acr^-) &= 0 \\
\csnd^2\acr^\spm &= 0
\end{alignat} 
The last equation yields $\acr^s=0$, then the first one
$\acr^+-\acr^-=0$, so the second one $\acr^\omega=0$ and the third one $\acr^-=\acr^+=0$ (we assumed $\Mach>1$). 
Hence $\acr^i_>=-\acr^i_<$ for every $i$, meaning all waves have (to leading order) the same strength as their same-type antipode.

But the last result, which is our main interest, is \emph{misleading}: we find below for the non-linearized problem 
that (with $\gamma$-law and sufficiently weak shocks) the antipodal wave strengths \emph{cannot} match; 
instead the \emph{mirror} images (across the flow axis) match. 
Besides there can be outgoing contacts even without incoming ones, as we show in section \ref{section:twoplustwocirc}; 
the strength is cubic in the shock strength, which is why we cannot observe the effect at the linear level.

\subsection{Symmetry/antisymmetric for polytropic two+two interactions}
\label{section:symm-anti}%

In special cases, namely for compressive 2+2 \emph{without} contacts (fig.\ \ref{fig:fournocontacts}) with \emph{polytropic} equation of state, 
we can say more by using non-linear arguments.
We recall the \defm{Hugoniot relation}
\begin{alignat}{5} 0 = \epm-\epm_0 + (\idens-\idens_0)\frac{\pp+\pp_0}{2} \notag\end{alignat} 
which holds for any equation of state, but for a polytropic equation of state with ratio of heats $\gisen$ we may use $e=pv/(\gisen-1)$ 
to obtain
\begin{alignat}{5} \frac{\idens}{\idens_0} = \frac{\mu^2\pp + \pp_0}{\pp+\mu^2\pp_0}, \label{eq:vv0pp0}\end{alignat} 
where $\mu^2=(\gisen-1)/(\gisen+1)$ is between $0$ and $1$ if we assume $1<\gisen<\infty$. 
Take a $\log$ on both sides and rearrange to get (with $[]$ the jump from the $0$ side)
\begin{alignat}{5} [\log\idens] = f([\log\pp]) \csep f(x) = \log\frac{\mu^2\exp x + 1}{\exp x+\mu^2}. \notag\end{alignat} 
$f$ is strictly convex:
\begin{alignat}{5} f'(x) &= \frac{\mu^2\exp x}{\mu^2\exp x+1} - \frac{\exp x}{\exp x+\mu^2} , \notag \end{alignat} 
\begin{alignat}{5}
f''(x) 
&= 
\frac{\mu^2\exp x}{(\mu^2\exp x+1)^2} 
- \frac{\exp x\mu^2}{(\exp x+\mu^2)^2} \notag
\end{alignat}
and the first denominator is smaller for $0\leq x\leq 1$ since $0<\mu^2<1$. 

Now, in absence of contacts necessarily 
\begin{alignat}{5} &
  f(\log \pp_e-\log \pp_n)+f(\log \pp_n-\log \pp_w) 
  \\&= \log \idens_e \subeq{-\log \idens_n+\log \idens_n}{=0=-\log \idens_s+\log \idens_s} - \log \idens_w \notag
  \\&= f(\log \pp_e-\log \pp_s)+f(\log \pp_s-\log \pp_w). \label{eq:ffff}\end{alignat} 
This can be written compactly as
\begin{alignat}{5} g(\lambda_n) = g(\lambda_s) \label{eq:gns}\end{alignat} 
where 
\begin{alignat}{5} 
  \lambda_n= \frac{ \log \pp_e-\log \pp_n }{ \log \pp_e-\log \pp_w } \in \boi01
\notag\end{alignat} 
and same for s in place of n, and where
\begin{alignat}{5} g(\lambda) 
&= f(\lambda(\log \pp_e-\log \pp_w)) + f((1-\lambda)(\log \pp_e-\log \pp_w)) .
\notag\end{alignat} 
$g$ is \emph{symmetric} in $\lambda$ across $\half$, 
and clearly $g$ is still strictly convex (note $\pp_e>\pp_w$ by compressiveness), 
so \eqref{eq:gns} shows
that either $\lambda_n=\lambda_s$ (symmetric case) or $\lambda_n=1-\lambda_s$ (antisymmetric case). Hence:
\begin{theorem}
  \label{th:sy-antisy}%
  For polytropic equation of state with $\gisen>1$, a compressive 2+2 shock reflection
  is either mirror-symmetric across the flow axis $\vv_w$, in particular $\pp_n=\pp_s$, 
  or anti-symmetric in the sense that pressure ratios across antipodal shocks are equal:
  \begin{alignat}{5} 
  \frac{\pp_e}{\pp_n} &= \frac{\pp_s}{\pp_w} 
  \csep
  \frac{\pp_e}{\pp_s} = \frac{\pp_n}{\pp_w} 
  \label{eq:antisy}
  \end{alignat} 
  and the same relations with $\pp$ replaced by $\idens$ (or $\epm,\spm,\hpm,\csnd$). 
\end{theorem}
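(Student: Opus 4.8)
The plan is to take as given the dichotomy already reached just before the statement: because $g$ is strictly convex and symmetric about $\lambda=\half$, equation \eqref{eq:gns} forces either $\lambda_n=\lambda_s$ or $\lambda_n=1-\lambda_s$. What remains is to translate each alternative back into the pressures and then to transport the resulting identities to the other thermodynamic variables. Throughout I would lean on the fact that the Hugoniot relation \eqref{eq:vv0pp0} can be written $[\log\idens]=f([\log\pp])$, so that the specific-volume ratio across any shock is a function of the pressure ratio alone.

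In the first case $\lambda_n=\lambda_s$ reads $\log\pp_e-\log\pp_n=\log\pp_e-\log\pp_s$, i.e.\ $\pp_n=\pp_s$. Applying $[\log\idens]=f([\log\pp])$ to the nw and sw shocks, which share the upstream pressure $\pp_w$, and using $\pp_n=\pp_s$ gives $\idens_n=\idens_s$ at once. To promote the scalar equality $\pp_n=\pp_s$ to genuine mirror symmetry I would argue that, for the fixed upstream state $w$ and a prescribed downstream pressure, the steady shock relations determine the downstream velocity up to reflection across the $\vv_w$ axis: the pressure jump fixes the upstream normal velocity $\vv_w\dotp\vn$, and thus the angle between $\vn$ and $\vv_w$ up to sign. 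Since the $2+2$ convention makes the nw and sw shocks deflect in opposite senses, equal downstream pressures force $\vv_n$ and $\vv_s$ to be exact reflections, whereupon the ne and se shocks are reflections too and $\vv_e$ lies on the axis.

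In the second case $\lambda_n=1-\lambda_s$ reads $\log\pp_e-\log\pp_n=\log\pp_s-\log\pp_w$, i.e.\ $\pp_e/\pp_n=\pp_s/\pp_w$; this is the first identity of \eqref{eq:antisy}, and the second, $\pp_e/\pp_s=\pp_n/\pp_w$, is equivalent to it, both asserting $\pp_e\pp_w=\pp_n\pp_s$. Feeding the equal pressure ratios across the antipodal pairs into $[\log\idens]=f([\log\pp])$ immediately yields the matching specific-volume ratios $\idens_e/\idens_n=\idens_s/\idens_w$ and $\idens_e/\idens_s=\idens_n/\idens_w$.

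Finally I would close the remaining variables algebraically from the polytropic closure. Since $\epm=\pp\idens/(\gisen-1)$, $\hpm=\gisen\pp\idens/(\gisen-1)$ and $\csnd^2=\gisen\pp\idens$ are each proportional to the product $\pp\idens$, combining the $\pp$- and $\idens$-identities already obtained yields the matching relations for $\epm,\hpm,\csnd^2$ (hence $\csnd$) in both cases; and since $\spm=c_v\log(\pp\idens^{\gisen})+\const$, the same input converts the multiplicative relations into the additive statement $\spm_e-\spm_n=\spm_s-\spm_w$, or simply $\spm_n=\spm_s$ in the symmetric case, which is the appropriate reading of ``the same relation'' for a quantity defined only up to an additive constant. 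I expect the one genuinely delicate point to be the mirror-symmetry claim in the first case: strict convexity alone delivers only the scalar equality $\pp_n=\pp_s$, and upgrading it to full geometric symmetry needs the uniqueness-up-to-reflection of the shocked state together with the convention separating upper from lower shocks.
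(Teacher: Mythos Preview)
Your proposal is correct and follows the paper's approach: the paper itself stops at the dichotomy $\lambda_n=\lambda_s$ versus $\lambda_n=1-\lambda_s$ obtained from strict convexity and symmetry of $g$, writing only ``Hence:'' before stating the theorem, so your translation back to pressures, the transport to $\idens$ via $[\log\idens]=f([\log\pp])$, and the algebraic extension to $\epm,\hpm,\csnd,\spm$ are precisely the details left implicit. The upgrade from $\pp_n=\pp_s$ to full geometric mirror symmetry, which you correctly flag as the one delicate point, is likewise not spelled out in the paper.
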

(Antisymmetry does not generally imply that antipodal shock angles differ by $180^\circ$
or that pressure \defm{differences} $\pp_n-\pp_w$ and $\pp_s-\pp_e$ are equal.) 

The symmetric case is of course possible; any regular reflection at a straight slip boundary once reflected across it yields an example. 
We show below antisymmetry can also occur without producing contacts.

\subsection{Ruling out antisymmetry}

Here we try to use a velocity angle argument to rule out unsymmetric interactions. 
The angle by which shocks turn velocity is described by the shock polar, which for polytropic equation of state
has a particularly elegant form \cite[section 121]{courant-friedrichs}: 
with upstream Mach number $\Mach_0$, upstream velocity $(\vx_0,0)$ and downstream
velocity $\vv$, set $(\xi,\eta)=\vv/\vx_0$, then for admissible shocks
\begin{alignat}{5} \eta = \pm|1-\xi|\sqrt{\frac{\xi-\xi_n}{\xi_m-\xi}} \label{eq:xietapolar}\end{alignat} 
where (again $\mu^2=(\gisen-1)/(\gisen+1)$)
\begin{alignat}{5} \xi_n = \mu^2 + (1-\mu^2)\Mach_0^{-2} \notag\end{alignat} 
is $\xi$ for a normal shock while
\begin{alignat}{5} \xi_m = 1 + (1-\mu^2)\Mach_0^{-2} = \xi_n + 1-\mu^2 \notag\end{alignat} 
is the upper limit of $\xi$ on the non-compressive branch of the polar. We only consider compressive shocks with $\gisen>1$, so $\xi_n\leq\xi<1<\xi_m$ so that the absolute value can be omitted from \eqref{eq:xietapolar}. 

We already know that the 2+2 pattern is either symmetric or antisymmetric; we only have to disprove the latter. So assume it, namely that antipodal shocks have equal
compression ratio. 

As derived in section \ref{section:denscircle} the variety of $(\xi,\eta)$ achieving a fixed compression ratio $\drat=\dens/\dens_0>1$ is a circle 
centered on the horizontal axis and intersecting it in $(\xi,\eta)=(1,0)$ and $(\xi,\eta)=(1/\drat,0)$; the circle equation is 
\begin{alignat}{5} 
  \eta^2=(1-\xi)(\xi-1/\drat). \label{eq:xietacircle}
\end{alignat}
When combining \eqref{eq:xietapolar} squared and \eqref{eq:xietacircle} the factor $1-\xi$ drops out immediately ($\xi=1$ is a vanishing shock which we ignore); 
we obtain
\begin{alignat}{5} \frac{(1-\xi)(\xi-\xi_n)}{\xi_m-\xi} = \xi-1/\drat , \end{alignat} 
with solution
\begin{alignat}{5} \xi = \frac{\xi_m/\drat-\xi_n}{1/\drat+\xi_m-\xi_n-1} 
= 1 - \frac{(1-\mu^2)(1-\drat^{-1})}{\drat^{-1}-\mu^2}\Mach_0^{-2}.
\label{eq:xiMsol} \end{alignat} 
The last fraction is positive for $\drat$ between $1$ and $\mu^{-2}=(\gisen+1)/(\gisen-1)>1$, the latter being the well-known upper limit for the compression ratio across shocks.
The turning angle is $\arctan(\eta/\xi)$; \eqref{eq:xietacircle} shows
\[ (\frac{\eta}{\xi})^2 = \frac{(1-\xi)(\xi-\drat^{-1})}{\xi^2} = \frac1\drat (\frac1\xi-1)(\drat-\frac1\xi). \] 
The quadratic-in-$\xi^{-1}$ right-hand side is clearly positive between zeros at $\xi=1,\drat^{-1}$ and attains its maximum at their average
\[ \xi^{-1} = \half(1+\drat). \]
From \eqref{eq:xiMsol} we see this $\xi$ is realized for 
\[ \Mach_0^2 
=
\frac{(1+\drat^{-1})(1-\mu^2)}{\drat^{-1}-\mu^2}.
\]
By symmetry the results also hold for $\eta<0$:
\begin{proposition}
  For polytropic $\gisen>1$ equation of state the (absolute value of the) turning angle of a shock with fixed compression ratio $\drat>1$ is increasing in the upstream Mach number up to some $\Mach_{0*}$, but decreasing beyond.
  $\Mach_{0*}$ increases monotonically from $\sqrt2$ to $+\infty$ as $\drat$ increases from $1$ to $(\gisen+1)/(\gisen-1)>1$. 
\end{proposition}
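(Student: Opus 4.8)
The plan is to read off both assertions directly from the two formulas already in hand: the expression \eqref{eq:xiMsol} giving $\xi$ as a function of $\Mach_0$ at fixed compression ratio $\drat$, and the identity $(\eta/\xi)^2 = \drat^{-1}(\xi^{-1}-1)(\drat-\xi^{-1})$. Since the turning angle is $\arctan(\eta/\xi)$ and $\arctan$ is strictly increasing, it suffices throughout to track the monotonicity of $(\eta/\xi)^2$; the square conveniently removes the sign ambiguity of the two polar branches.

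First I would introduce the auxiliary variable $w := \xi^{-1}$. From \eqref{eq:xiMsol} we have $\xi = 1 - c\,\Mach_0^{-2}$ with $c = (1-\mu^2)(1-\drat^{-1})/(\drat^{-1}-\mu^2)$, which is \emph{strictly positive} precisely when $1<\drat<\mu^{-2}$, since numerator and denominator are then both positive. Hence $\xi$ is strictly increasing in $\Mach_0$, so $w=\xi^{-1}$ is strictly \emph{decreasing}, running monotonically from $w=\drat$ (where $\xi=1/\drat$, the normal shock, $\eta=0$, reached at the smallest admissible $\Mach_0$) down to $w=1$ (where $\xi=1$, the vanishing shock, $\eta=0$, reached as $\Mach_0\to\infty$). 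The right-hand side $\drat^{-1}(w-1)(\drat-w)$ is a downward parabola in $w$, vanishing at $w=1$ and $w=\drat$ and peaking at the midpoint $w=\tfrac12(1+\drat)$. Composing the strictly monotone change of variables $w(\Mach_0)$ with this unimodal parabola shows that $(\eta/\xi)^2$, and hence the turning angle, first increases then decreases in $\Mach_0$, with a unique maximum at the $\Mach_0$ for which $w=\tfrac12(1+\drat)$; substituting that value into \eqref{eq:xiMsol} identifies it as $\Mach_{0*}^2 = (1+\drat^{-1})(1-\mu^2)/(\drat^{-1}-\mu^2)$, the displayed value.

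For the second assertion I would view $\Mach_{0*}^2$ as a function of $s:=\drat^{-1}$, namely $\Mach_{0*}^2 = (1-\mu^2)\,(1+s)/(s-\mu^2)$. Differentiating the rational factor gives $\frac{d}{ds}\frac{1+s}{s-\mu^2} = -(1+\mu^2)/(s-\mu^2)^2 < 0$, so $\Mach_{0*}^2$ is strictly decreasing in $s$ and therefore strictly increasing in $\drat=s^{-1}$. Evaluating endpoints, $s=1$ (i.e.\ $\drat=1$) gives $\Mach_{0*}^2 = 2(1-\mu^2)/(1-\mu^2)=2$, hence $\Mach_{0*}=\sqrt2$, while $s\downarrow\mu^2$ (i.e.\ $\drat\uparrow\mu^{-2}=(\gisen+1)/(\gisen-1)$) sends the denominator to $0^+$ and $\Mach_{0*}\to+\infty$. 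This gives the claimed monotone increase from $\sqrt2$ to $+\infty$.

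The only genuinely delicate point is the reduction in the first step: one must verify that $w=\xi^{-1}$ is strictly monotone on the \emph{entire} admissible $\Mach_0$-interval and correctly pin its endpoints to the normal and vanishing shocks, so that as $\Mach_0$ sweeps its range the parabola is traversed from one zero, through its vertex, to the other zero exactly once. Once that is secured, the unimodality in $\Mach_0$ is immediate and everything else is the routine monotonicity of a single rational function of $s$.
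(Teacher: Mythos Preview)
Your proposal is correct and follows essentially the same approach as the paper: both use the parabola $(\eta/\xi)^2=\drat^{-1}(\xi^{-1}-1)(\drat-\xi^{-1})$ together with \eqref{eq:xiMsol} to locate the maximum at $\xi^{-1}=\tfrac12(1+\drat)$ and then read off $\Mach_{0*}^2=(1+\drat^{-1})(1-\mu^2)/(\drat^{-1}-\mu^2)$. You supply details the paper leaves implicit --- the strict monotonicity of $\xi$ (hence $w=\xi^{-1}$) in $\Mach_0$ with the endpoint identification, and the explicit derivative computation for $\Mach_{0*}^2$ as a function of $s=\drat^{-1}$ --- but these are exactly the steps the paper's terse argument is relying on.
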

Now we can complete the argument. 
For sufficiently weak shocks our linearization analysis shows (see fig.\ \ref{fig:fournocontacts}) that $\vv_s$ points right and slightly upward, $\vv_n$ right and slightly downward.
If $1<\Mach_w\leq\sqrt 2$, then 
the clockwise angle from $\vv_s$ to $\vv_w$ is larger 
than that from $\vv_e$ to $\vv_n$ since the compression ratios for the antipodal sw and ne shocks are by assumption equal 
but the upstream Mach number $\Mach_w$ of sw is by compressiveness strictly larger than the upstream Mach number $\Mach_s$ of ne.
Similarly the clockwise angle from $\vv_w$ to $\vv_n$ is larger than that from $\vv_s$ to $\vv_e$. 
Hence summing on the w side yields a larger clockwise angle from $\vv_s$ to $\vv_n$ than on the e side, contradiction! 

For $\Mach_w>\sqrt 2$ this argument (with all ``larger'' replaced by ``smaller'') still works if the shocks are chosen so weak that $\Mach_w$ as well as $\Mach_n,\Mach_s$ 
(which approach $\Mach_w$ from below as the shocks become weak) are above the respective $\Mach_{0*}$ (which decrease to $\sqrt2<\Mach_w$ as the shocks become weak). 
\begin{theorem}
  \label{th:weak-polytropic-symmetric}%
  For polytropic equation of state with $\gisen>1$, 
  a 2+2 shock reflection is necessarily mirror-symmetric across the flow axis $\vv_w$
  if all shocks are sufficiently weak (depending on $\Mach_w$).
\end{theorem}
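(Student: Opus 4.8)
The plan is to combine Theorem~\ref{th:sy-antisy} with the turning-angle proposition just established and let the weak-shock linearization of Section~\ref{section:linsuper} do the orientational bookkeeping. By Theorem~\ref{th:sy-antisy} a compressive polytropic 2+2 reflection is either mirror-symmetric or antisymmetric, and mirror symmetry is exactly the assertion to be proved; so it suffices to exclude the strictly antisymmetric case once the shocks are weak. I would therefore assume antisymmetry and derive a contradiction. Label the upstream (west) state $w$ and the common downstream (east) state $e$, joined by the upper path $w\to n\to e$ (shocks nw, ne) and the lower path $w\to s\to e$ (shocks sw, se); the coincidence of the two terminal states $\vv_e$ is precisely the no-contact condition. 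Antisymmetry, in the $\idens$-form of \eqref{eq:antisy}, forces the two antipodal pairs to share a common compression ratio: nw with se, and ne with sw.

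The engine of the argument is the monotonicity of the turning angle at fixed compression ratio. Since $w$ is upstream for both nw and sw while $n,s$ are downstream of those shocks, compressiveness gives the strict ordering $\Mach_w>\Mach_n$ and $\Mach_w>\Mach_s$ of upstream Mach numbers; each antipodal pair thus carries a common compression ratio but \emph{unequal} upstream Mach numbers, the $w$-anchored member always having the larger. The total velocity turning from $\vv_s$ to $\vv_n$ can be read off along either side — as the sum of the nw and sw turnings on the west side, or as the sum of the ne and se turnings on the east side — and these two sums must agree. To legitimize this addition I would invoke the linearization (fig.~\ref{fig:fournocontacts}): as the shocks weaken the pattern tends to the symmetric one, with $\vv_w,\vv_e$ near the flow axis, $\vv_s$ just above and $\vv_n$ just below, so that $\vv_w$ and $\vv_e$ both lie strictly between $\vv_s$ and $\vv_n$ angularly and $\Mach_n,\Mach_s\nearrow\Mach_w$.

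It then remains to locate $\Mach_w$ relative to the turning points $\Mach_{0*}$ of the proposition. If $1<\Mach_w\le\sqrt2$, every relevant upstream Mach number lies in the strictly increasing regime (as $\Mach_{0*}\ge\sqrt2$ for all admissible compression ratios), so within each antipodal pair the $w$-anchored shock turns the velocity by a strictly larger angle than its east antipode; summing, the west-side total strictly exceeds the east-side total, contradicting their equality. If instead $\Mach_w>\sqrt2$, I would take the shocks weak enough that $\Mach_w,\Mach_n,\Mach_s$ all exceed the relevant $\Mach_{0*}$, which is possible because $\Mach_{0*}\searrow\sqrt2<\Mach_w$ as the compression ratios tend to $1$ while $\Mach_n,\Mach_s\nearrow\Mach_w$. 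Now all comparisons run in the strictly decreasing regime, every inequality reverses, and the same summation again yields a strict mismatch. In either case antisymmetry is impossible for sufficiently weak shocks, leaving only mirror symmetry.

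The step I expect to be the main obstacle is the bookkeeping of \emph{oriented} angles in the summation: one must guarantee that the four individual turnings combine into the same signed angle $\vv_s\to\vv_n$ on both sides, with no $2\pi$ wraparound or sign flip, which is exactly why the weak-shock linearization is needed to pin $\vv_w,\vv_e$ strictly between $\vv_s$ and $\vv_n$. A secondary point is uniformity — checking that ``sufficiently weak'' can be chosen so that \emph{all three} Mach numbers $\Mach_w,\Mach_n,\Mach_s$ sit simultaneously on the same monotone branch of the turning-angle function across the entire (shrinking) range of compression ratios that actually occur.
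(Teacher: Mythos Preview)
Your proposal is correct and follows essentially the same route as the paper: reduce via Theorem~\ref{th:sy-antisy} to excluding antisymmetry, then compare the west-side and east-side total turning angles using the fixed-compression-ratio monotonicity proposition, splitting cases at $\Mach_w=\sqrt2$ and invoking the linearization to pin the orientation of $\vv_n,\vv_s$. Your identification of the antipodal pairs (nw with se, sw with ne) and the relevant upstream Mach numbers matches the paper's argument, and your explicit remarks on the oriented-angle bookkeeping and on choosing the shocks weak enough for uniformity are exactly the two points the paper handles more tersely.
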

Although the result is technically true near $\Mach_w=\sqrt2$, 
the details of our argument already suggest that anti-symmetric examples with faint shocks can be constructed near that Mach number:
\begin{example}
  A few exact values:
  \begin{alignat}{5} 
    \gisen &= \frac53 \csep \dens_w = 1 \csep \csnd_w = 1 , \notag\\
    \vv_w &= (\frac32,0) \csep \vv_n = (\frac{73}{50},-\frac{\sqrt{39}}{150}) \csep \vv_s = (\frac{49}{34},\frac{\sqrt{1463}}{646}), \notag\\
    \vv_e &= (\frac{3577}{2560}+\frac{\sqrt{1463}\sqrt{39}}{145920},-\frac{49\sqrt{39}}{7680}+\frac{73\sqrt{1463}}{48640}). \notag
  \end{alignat} 
  From these the shock normals are easily calculated by $\vn=[\vv]/|[\vv]|$, densities via $[\dens\vv\dotp\vn]=0$, 
  enthalpies by $[\hpm+\half|\vv|^2]=0$ and $\hpm=\csnd^2/(\gisen-1)$ etc.
  The density ratios are $\dens_n/\dens_w=18/17=\dens_e/\dens_s$ and $\dens_s/\dens_w=38/35$, equal for antipodal shocks as proven necessary above, but not symmetric.
  Some decimal values:\\
  \begin{tabular}{|l|l|l|l|l|}
    & w & n & s & e \\
    $\vx$ & 1.5 & 1.46 & 1.441176471 & 1.398902591 \\
    $\vy$ & 0 & -0.04163331999 & 0.05920926161 & 0.01756084113 \\
    $\csnd$ & 1 & 1.019258990 & 1.027862611 & 1.047658207 \\
    $\Mach$ & 1.5 & 1.432995442 & 1.403292831 & 1.335371403 \\ 
    $\dens$ & 1 & 1.058823529 & 1.085714286 & 1.149579832 \\
    $\pp$ & 0.6 & 0.66 & 0.6882352941 & 0.7570588236 
  \end{tabular}
\end{example}
The shock strengths can be made arbitrarily small at $\Mach_w=\sqrt2=1.414...$.

\subsection{No symmetry for ideal non-polytropic two+two}

In section \ref{section:symm-anti} we obtained that $[\log\idens]$ is a convex function of $[\log\pp]$, a property that essentially relies on a second derivative
of $\pp$ with respect to $\idens$ \emph{and} on eliminating $\spm$ from the relation using the polytropic assumption. 
For nonpolytropic pressure laws we cannot generally expect the same results. 
\eqref{eq:vv0pp0} has $\idens/\idens_0$ a function of $\pp/\pp_0$ alone, without dependence on entropy or other additional quantities,
a property not shared by most other equations of state. 
If we permit such a dependence, an analogue of \eqref{eq:gns} still constrains pressure ratios, but in a much more implicit way;
besides the derivation also requires that $g$ is strictly convex in $\lambda$, which requires strong conditions on higher derivatives that are hard to justify from thermodynamic principles.

Indeed examples of contact-free $2+2$ interactions that are neither symmetric nor antisymmetric can be found. 
We impose the ideal gas law $\pp=\dens\Temp$ (with gas constant scaled to $1$) with 
temperature $\Temp$ and energy per mass $\epm=\fod\Temp/2$.
The usual (but not universal) observed behaviour is that $\fod$ (``degrees of freedom'' per particle) increases with $\Temp$. 
We give an explicit example to demonstrate 2+2 reflections with such equations of state can be unsymmetric and not anti-symmetric:\\
\begin{example}
Some exact values: 
\begin{alignat*}{5} 
  \dens_w &= 1 \csep \pp_w = \frac{27}{100} \csep  \\ 
  \vv_w &= (1,0) \csep \vv_n = (\frac{79}{100},-\frac15) \csep \vv_e = (\frac12,-\frac{11}{1000}) \\
  \vv_s &= (\frac{2526866704111}{3099987560825},\frac{69706404456911}{371998507299000}) 
\end{alignat*} 
which yield $\vn=[\vv]/|[\vv]|$ and then $\dens_w$ yields the remaining densities via $[\dens\vv.\vn]=0$,
$\pp_w$ yields the remaining pressures via $[\dens(\vv.\vn)^2+\pp]=0$, $\pp=\dens\Temp$ yields temperatures, 
$\hpm_w$ with $[\hpm+|\vv|^2/2]$ yields enthalpies per mass, then $\epm=\hpm-\pp\idens$ yields energies per mass,
$\epm=\frac{\fod}{2}\Temp$ yields all $\fod$.
We may calculate $\csnd^2=\dati\pp\dens\epm+\dens^{-2}\pp\dati\pp\epm\dens=\Temp(1+\Temp_\epm)=\Temp(1+2/\fod)$
\emph{if} we assume the equation of state has $\fod$ constant near the chosen four $\Temp$ (other choices are possible but pointless). 
Some decimal values:\\
\begin{tabular}{|l|l|l|l|l|}
  & w & n & s & e \\
  $\vx$ & 1 & 0.79 & 0.815121562 & 0.5 \\
  $\vy$ & 0 & -0.2 & 0.187383560 & -0.011 \\
  $\dens$ & 1 & 1.667990469 & 1.599490885 & 3.026853977 \\
  $\pp$ & 0.27 & 0.48 & 0.454878437 & 0.925186655 \\
  $\Temp$ & 0.27 & 0.287771428 & 0.284389515 & 0.305659494 \\
  $\fod$ & 7.407407408 & 7.993695402 & 7.987935856 & 8.763215480 \\
  $\csnd$ & 0.585576639 & 0.599809158 & 0.596317258 & 0.612714583 \\
  $\Mach$ & 1.707718399 & 1.358637656 & 1.402579805 & 0.816238097 
\end{tabular}\\
Indeed the $\fod$ are increasing with $\Temp$. 
The four pressure ratios are distinct. 
\end{example}

\subsection{Arbitrary circulation sign for polytropic two+two}
\label{section:twoplustwocirc}

Given our results on 2+2 interactions with zero contacts, 
it is now easy to argue that 2+2 can produce outgoing contacts with nonzero vorticity even if there are no incoming contacts (left dashed line in fig.\ \ref{fig:sectors}),
even for weak shocks with $\gisen>1$ equation of state:
pick two incoming shocks of sufficiently low strength, then as discussed in section \ref{section:linsuper} 
solve the Riemann problem for steady supersonic Euler to find the outgoing waves.
Theorem \ref{th:weak-polytropic-symmetric} applies, showing that contacts can be absent 
only if the pattern is symmetric. But that is not possible if we chose the incoming shocks to have unequal strength, so there must be a nonzero outgoing contact.
Its circulation is necessarily nonzero (by section \ref{section:thermo} using $\gisen>1$ equation of state); by mirror reflection either sign is possible. 

Obviously no control over circulation can be expected for higher interactions like 3+2, 3+3, 4+2 etc.; only the k+1 case allows general results as in section \ref{section:zerocirc} and \ref{section:circulation}.

\section{Appendix}

For Euler flow (fig.\ \ref{fig:kshocks}):
\begin{proposition}
  \propmanya%
\end{proposition}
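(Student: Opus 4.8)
The plan hinges on the density-circle Proposition of Section~\ref{section:denscircle}: across a compressive shock the downstream velocity lies in the \emph{open} disk whose diameter is the segment from the origin to the upstream velocity. First I would read off two consequences. Applied to the Mach stem, whose upstream state is $0$ with $\vv_0=(1,0)$, the disk is $\{(x-\tfrac12)^2+y^2<\tfrac14\}$, which lies in the open right halfplane; a genuine shock keeps $\vv_c$ strictly inside (the boundary is the limit $\dens_c=\infty$), so $0<\vx_c<1$, which is (a). Moreover every point of that disk makes an acute angle with, and is strictly shorter than, the upstream velocity, so along the whole fan $\vv_0,\vv_1,\dots,\vv_k,\vv_c$ the speed decreases and consecutive velocity \emph{directions} differ by less than $90^\circ$; these are the quantitative inputs I would save for (e).

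For (d) I would use that the contact carries no mass flux, so $\dens\,\vv\dotp\vn_c=0$ on both sides forces $\vv_k\dotp\vn_c=\vv_c\dotp\vn_c=0$. Hence $\vv_k$ and $\vv_c$ are both tangent to the contact and therefore parallel to it and to each other; since $\vv_c$ lies in the open right halfplane by (a) and the flow crosses every discontinuity in the $\vv\dotp\vn>0$ sense, all three point outward, i.e.\ in the same direction.

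Both (b) and the mechanism behind (c) I would extract by feeding the normal conventions into \eqref{eq:pukmo}/\eqref{eq:puu}. For an upper shock at outward-tangent angle $\alpha_j$ the clockwise normal is $\vn_j=(\sin\alpha_j,-\cos\alpha_j)$, and because the downstream normal velocity is the smaller one the jump $\vv_j-\vv_{j-1}$ points along $-\vn_j$. When $\alpha_j\in(90^\circ,180^\circ)$ (shock in the open upper-left quadrant) this direction is down-left, so $\vy_j<\vy_{j-1}$; starting from $\vy_0=0$ this forces every such $\vy_j<0$, placing the $\vv_j$ in the lower halfplane and turning the velocity clockwise, the angular-velocity statement of (c). The same computation for the Mach stem, whose \emph{counter}clockwise normal is $\vn_0=(-\sin\alpha_0,\cos\alpha_0)$ and whose jump $\vv_c-\vv_0$ again points along $-\vn_0$, yields $1-\vx_c=\lambda(-\sin\alpha_0)$ with $\lambda>0$; since $\vx_c<1$ this gives $\sin\alpha_0<0$, i.e.\ the Mach stem lies in the open lower halfplane, proving (b).

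The hard part is upgrading the lower-halfplane conclusion of (c) to the full lower-\emph{right} quadrant, i.e.\ keeping $\vx_j>0$ (velocity angle above $-90^\circ$) and the turning strictly monotone through the whole fan; the density disk alone fails to do this once $\vv_{j-1}$ leaves the positive axis. I would therefore carry an inductive invariant — velocity angle in $(-90^\circ,0^\circ]$, strictly decreasing — and close it using the positivity $\vv_{j-1}\dotp\vn_j>0$ and $\vv_j\dotp\vn_j>0$ together with compressiveness, which is exactly the orientation bookkeeping the clockwise/counterclockwise conventions were arranged to make consistent. Granting (c), claim (e) becomes the remaining translation: shock half-line $j$ is perpendicular to $\vv_j-\vv_{j-1}$, so the monotone clockwise rotation of the velocities by less than $90^\circ$ per step, with the density disks convex, makes the outward tangents $\alpha_j$ decrease with every consecutive gap $\alpha_{j-1}-\alpha_j<180^\circ$; pinning that bound rather than mere monotonicity is the delicate point, and is precisely what is later invoked for the sign of $t$ in the pressure-sliding reduction.
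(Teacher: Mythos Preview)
Your treatment of (a) and (b) is sound; using the compressive density disk for (a) is a clean alternative to the paper's direct observation that upstream and downstream velocities enclose an angle $<90^\circ$.

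For (c) you correctly flag the gap. The paper's inductive step is sharper than ``positivity plus compressiveness'': for a backward compressive shock the downstream velocity lies strictly in the angle swept clockwise from the upstream velocity to the \emph{inward} shock tangent. When the shock sits in the open upper-left quadrant and the upstream velocity is already in the closed lower-right quadrant, both bounding rays of that angle lie in the lower-right quadrant, so the induction closes in one line. Your proposed invariant would have to reproduce this geometric fact, and you have not indicated how.

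The real gap is (d). Your sentence ``the flow crosses every discontinuity in the $\vv\dotp\vn>0$ sense'' is false precisely at the contact, where $\vv\dotp\vn_c=0$, so it cannot orient the contact. From the mass-flux argument you only get that $\vv_c$, $\vv_k$ and the contact line are collinear; the entire content of (d) is excluding the antipodal alignments. The paper does this by a case split on whether the Mach stem lies in the lower-left or lower-right quadrant, in each case pinning down the sector available to the contact and then showing that an antipodal $\vv_k$ would either point \emph{into} shock $k$ (impossible for a downstream velocity) or force all upper shocks into the upper-left quadrant, where (c) supplies a contradiction. None of this is present in your outline.

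Your plan for (e) is unlikely to close as written. You attempt to bound $\alpha_{j-1}-\alpha_j$ via a $<90^\circ$ rotation of the \emph{velocities}, but the shock directions are fixed by the \emph{jumps} $\vv_{j-1}-\vv_j$, and nothing you have forces consecutive jumps to turn monotonically; convexity of a single density disk says nothing about the relative orientation of successive disks. The paper instead argues by contradiction: if the sector between shocks $i$ and $i+1$ exceeded $180^\circ$, the constant $\vv_i$ would flip from clockwise to counterclockwise angular part within that sector and would need a further $180^\circ$ to flip back before shock $i+1$; combined with (c) and (a)+(d) this overruns $360^\circ$. Note that (e) thus rests on (c) and (d), so the gaps there propagate.
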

\begin{proof}
  Note: $[\vv]\dotp\vt=0$ and $\vv\dotp\vn>0$ on each side means downstream and upstream $\vv$ of a shock enclose an angle less than $90^\circ$.
  
  (a) The upstream velocity $\vv_0=(1,0)$ of shock $0$ (Mach stem) is exactly rightward, so the downstream velocity $\vv_c$ must be rightward.

  (b) By definition the Mach stem is the first shock counterclockwise from the negative horizontal axis, 
  and its upstream velocity $\vv_0$ is exactly rightward, and $0<\vv_0\dotp\vn_0=\vn^x_0$ at the Mach stem, so the Mach stem is necessarily in the open lower halfplane.

  (c) Let $i\leq k$ maximal so that shocks $1,...,i$ are in the upper left quadrant. $\vv_0=(1,0)$, so shock $1$ is backward, and since it is also compressive the downstream velocity $\vv_1$ is in the interior of the angle clockwise from $\vv_0$ to the inward tangent, in particular inside the open lower right quadrant. Then shock $2$, which is also in the open upper left quadrant, must be backward as well, with downstream velocity $\vv_2$ in the same open lower right quadrant, etc.
  Since all these velocities are in the open lower right quadrant (or on the positive horizontal axis), the radial velocity is inward in the open upper quadrant,
  therefore clockwise angular velocity is strictly increasing when moving clockwise between shocks, and since it is zero at the negative horizontal axis 
  and clockwise at each side of each shock it must be clockwise everywhere in the open upper left quadrant. 
  
  The arguments for (d)--(e) essentially use that $360^\circ$ degrees is not enough room for ``pathological'' patterns (see also \cite{roberts-stss} for similar arguments). 

  (d) Case 1: the Mach stem is in the open lower left quadrant or on the negative vertical axis. Then since $\vv_0$ is exactly rightward and the Mach stem compressive, its downstream velocity $\vv_c$ is in the angle clockwise from Mach stem \emph{antipode} to positive horizontal axis. $\vv_c$ is parallel to the contact which can (by definition) not be in the angle clockwise from negative horizontal axis to Mach stem, so the contact cannot be antipodal to $\vv_c$, it must have the same direction. In particular both are in the upper right quadrant or on the positive horizontal axis. 

  $\vv_k$ cannot be antipodal to $\vv_c$ because then it would be in the open lower left quadrant or on the negative horizontal axis, 
  hence point into shock $k$ (which is in the angle clockwise from negative horizontal axis to contact), not away from it like a downstream velocity should.
  
  Case 2: the Mach stem is in the open lower right quadrant. 
  Then $\vv_c$ is in the interior of the angle clockwise from positive horizontal axis to Mach stem, in particular in the open lower right quadrant. 
  If the contact is antipodal to it then it is in the open upper left quadrant, 
  but then by definition all upper shocks are in the angle clockwise to it from the negative horizontal axis, in particular in the upper left quadrant, 
  so (c) shows the angular velocity is not zero anywhere there, but it has to be at a contact --- contradiction. So the contact and $\vv_c$ point in the exact same direction. 

  If $\vv_k$ is antipodal to $\vv_c$, hence in the open upper left quadrant, then since $\vv_k$ is the downstream velocity of shock $k$, hence pointing away from it,
  shock $k$ and also $1,...,k-1$ are in the upper left halfplane, 
  but then (c) says $\vv_k$ must be in the open lower right quadrant, contradiction.
  
  (e) Assume the angle between upper shocks $i,i+1$ is $>180^\circ$. 
  We move in a clockwise direction from the negative horizontal axis. 
    On the downstream side of shock $i$ its downstream velocity $\vv_i$ has clockwise angular part. Moving further $\vv_i$ will flip to counterclockwise before $180^\circ$ are traversed, in particular before shock $i+1$. This flip cannot happen before the positive vertical axis since if shock $i$ is in the open upper left quadrant $\vv_i$ would be in the open lower right quadrant by (c), hence clockwise from shock $i$ to axis.

 Since $\vv_i$ cannot be counterclockwise on the upstream side of shock $i+1$, another $180^\circ$ must be traversed to let it flip back to clockwise. Shock $i+1$ would occur more than $270^\circ$ clockwise from the negative horizontal axis, but the contact must occur even later, contradicting (a)+(d).
\end{proof}

\section*{Acknowledgement}

This research was partially supported by Taiwan MOST grant 105-2115-M-001-007-MY3.

\bibliographystyle{abbrvnat}

\end{document}